\newtheorem{observation}{Observation}
\begin{document}

\title{Constructing the Simplest Possible Phylogenetic Network from Triplets\thanks{Part of this research has been funded by the Dutch BSIK/BRICKS project.}}

\author{Leo van Iersel\inst{1} and Steven Kelk\inst{2}}

\authorrunning{Van Iersel, Kelk}

\institute{Department of Mathematics and Computer Science, Technische Universiteit Eindhoven, P.O. Box 513, 5600 MB
Eindhoven, The Netherlands,
\email{l.j.j.v.iersel@tue.nl}\\
\and Centrum voor Wiskunde en Informatica (CWI), P.O. Box 94079, 1090 GB Amsterdam, The Netherlands,
\email{s.m.kelk@cwi.nl} \\
}

\maketitle

\begin{abstract}
A phylogenetic network is a directed acyclic graph that visualises
an evolutionary history containing so-called \emph{reticulations}
such as recombinations, hybridisations or lateral gene transfers.
Here we consider the construction of a simplest possible
phylogenetic network consistent with an input set $T$, where
$T$ contains at least one phylogenetic tree on three leaves (a \emph{triplet}) for
each combination of three taxa. To quantify the complexity of a
network we consider both the total number of reticulations and the
number of reticulations per biconnected component, called the
\emph{level} of the network. We give polynomial-time algorithms for
constructing a level-1 respectively a level-2 network that contains
a minimum number of reticulations and is consistent with $T$ (if
such a network exists). In addition, we show that if $T$ is
precisely equal to the set of triplets consistent with some network,
then we can construct such a network with smallest possible level in
time $O(|T|^{k+1})$, if $k$ is a fixed upper bound on the level of
the network.
\end{abstract}

\section{Introduction}
One of the ultimate goals in computational biology is to create methods that can reconstruct evolutionary histories
from biological data of currently living organisms. The immense complexity of biological evolution makes this task
almost a hopeless one \cite{morrison}. This has motivated researchers to focus first on the simplest possible pattern
of evolution. This least complicated shape of an evolutionary history is the tree-shape. Now that treelike evolution has
been extremely well studied, a logical next step is to consider slightly more complicated evolutionary scenarios,
gradually extending the complexity that our models can describe. At the same time we also wish to take into account the
parsimony principle, which tells us that amongst all equally good explanations of our data, one prefers the simplest one
(see e.g. \cite{hein}).\\
\\
For a set of taxa (e.g. species or strains), a phylogenetic tree describes (a hypothesis of) the evolution that these
taxa have undergone. The taxa form the leaves of the tree while the internal vertices represent events of genetic
divergence: one incoming branch splits into two (or more) outgoing branches.\\
\\
Phylogenetic networks form an extension to this model where it is also possible that two branches combine into one new
branch. We call such an event a \emph{reticulation}, which can model any kind of non-treelike (also called
``reticulate'') evolutionary process such as recombination, hybridisation or lateral gene transfer. In addition,
reticulations can also be used to display different possible (treelike) evolutions in one figure. In recent years
there has emerged enormous interest in phylogenetic networks and their application
\cite{baroni2}\cite{husonbryant}\cite{makarenkov}\cite{morrison}\cite{moret2004}.\\
\\
This model of a phylogenetic network allows for many different degrees of complexity, ranging from networks that are
equal, or almost equal, to a tree to unrealistically complex webs of frequently diverging and recombining lineages.
Therefore we consider two different measures for the complexity of a network. The first of these measures is the total
number of reticulations in the network. Secondly, we consider the \emph{level} of the network, which is an upper bound
on the number of reticulations per biconnected component of the network. Informally, the level of a network is a bound
on the number of reticulations that can be mutually dependent. In this paper we consider two different approaches for
constructing networks that are as simple as possible. The first approach minimises the total number of reticulations
for a fixed level (of at most two) and the second approach minimises (under certain special restrictions on the input)
the level while the total number of reticulations is unrestricted.\\
\\
Level-$k$ phylogenetic networks were first introduced by Choy et al. \cite{choy} and further studied by different
authors \cite{recomb}\cite{reflections}\cite{JS1}. Gusfield et al. gave a biological justification for level-1
networks (which they call ``galled trees'') \cite{gusfield}. Minimising reticulations has been very well studied in
the framework where the input consists of (binary) sequences \cite{gusfield2}\cite{hein}\cite{song1}\cite{song2}. For
example, Wang et al. considered the problem of finding a ``perfect phylogeny'' with a minimum number of reticulations
and showed that this problem is NP-hard \cite{wang}. Gusfield et al. showed that this problem can be solved in
polynomial time if restricted to level-1 networks \cite{gusfield}.\\
\\
There are also several results known already about the version of the problem where the input consists of a set of
trees and the objective is to construct a network that is ``consistent'' with each of the input trees. Baroni et al.
give bounds on the minimum number or reticulations needed to combine two trees \cite{baroni} and Bordewich et al.
showed that it is APX-hard to compute this minimum number exactly \cite{bordewich}. However, there exists an exact
algorithm \cite{bordewich2} that runs reasonably fast in many practical situations.\\
\\
In this paper we also consider input sets consisting of trees, but restrict ourselves to small trees with three leaves
each, called \emph{triplets}. See Figure~\ref{fig:singletriplet} for an example. Triplets can for example be
constructed by existing methods, such as Maximum Parsimony or Maximum Likelihood, that work accurately and fast for
small numbers of taxa. Triplet-based methods have also been well-studied. Aho et al. \cite{aho} gave a polynomial-time
algorithm to construct a tree from triplets if there exists a tree that is consistent with all input triplets. Jansson
et al. \cite{JS2} showed that the same is possible for level-1 networks if the input triplet set is \emph{dense}, i.e.
if there is a triplet for any set of three taxa. Van Iersel et al. further extended this result to level-2 networks
\cite{recomb}. From non-dense triplet sets it is NP-hard to construct level-$k$ networks for any $k\geq 1$
\cite{reflections}\cite{JS2}. From the proof of this result also follows directly that it is NP-hard to find a network
consistent with a non-dense triplet set that contains a minimum number of reticulations.\footnote{This follows from
the proof of Theorem~7 in \cite{JS2}, since only one reticulation is used in their reduction.} It is unknown whether
this problem becomes easier if the input triplet set is dense.

\begin{figure}
\centering \vspace{-.2cm} \includegraphics[scale=0.8]{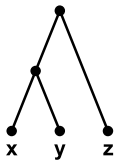} \caption{One of the three possible triplets
on the leaves $x$, $y$ and $z$. Note that, as with all figures in this article, all arcs are directed downwards.}
\vspace{-0.5cm} \label{fig:singletriplet}
\end{figure}

\noindent In the first part of this paper we consider fixed-level networks and aim to minimise the total number of
reticulations in these networks. In Section~\ref{sec:lev1} we give a polynomial-time algorithm that constructs a
level-1 network consistent with a dense triplet set $T$ (if such a network exists) and minimises the total number of
reticulations over all such networks. This gives an extension to the algorithm by Jansson et al. \cite{JS2}, which can
also reconstruct level-1 networks, but does not minimise the number of reticulations. To illustrate this we give in
Section~\ref{sec:prelim} an example dense triplet set on $n$ leaves for which the algorithm in \cite{JS2} (and the
ones in \cite{recomb} and \cite{JS1}) creates a level-1 network with $\frac{n-1}{2}$ reticulations. However, a level-1
network with just one reticulation is also possible and our algorithm MARLON is able to find that network. We have
implemented MARLON (made publicly available \cite{marlon}) and tested it on simulated data. Results are in
Section~\ref{sec:experiments}. The worst case running time of the algorithm is $O(n^5)$ for $n$ leaves (and hence
$O(|T|^{\frac{5}{3}})$ with $|T|$ the input size).\\
\\
In Section~\ref{sec:lev2} we further extend this approach by giving an algorithm that even constructs a level-2
network consistent with a dense triplet set (if one exists) and again minimises the total number of reticulations over
all such networks. This means that if the level is at most two, we can minimise both the level and the total number of
reticulations, giving priority to the criterium that we find most important. The running time is $O(n^9)$ (and
thus $O(|T|^3)$).\\
\\
Minimising the level of phylogenetic networks becomes even more challenging when this level can be larger than two,
even without minimising the total number of reticulations. Given a dense set of triplets, it is a major open problem
whether one can construct a minimum level phylogenetic network consistent with these triplets in polynomial time.
Moreover, it is not even known whether it is possible to construct a level-3 network consistent with a dense input
triplet set in polynomial time. In Section~\ref{sec:extreme} of this paper we show some significant progress in this
direction. As a first step we consider the restriction to ``simple'' networks, i.e. networks that contain just one
nontrivial biconnected component. We show how to construct, in $O(|T|^{k+1})$ time, a minimum level simple network
with level at most $k$ from a dense input triplet set (for fixed $k$). Subsequently we show that this can be used to
also generate general level-$k$ networks if we put an extra restriction on the quality of the input triplets. Namely,
we assume that the input set contains exactly all triplets consistent with some network. If that is the case then our
algorithm can find such a network with a smallest possible level. The algorithm runs in polynomial time $O(|T|^{k+1})$
if the upper bound $k$ on the level of the network is fixed. This result constitutes an important step forward in the
analysis of level-$k$ networks, since it provides the first positive result that can be used for all levels $k$.

\section{Preliminaries}\label{sec:prelim}
A \emph{phylogenetic network} (\emph{network} for short) is defined as a directed acyclic graph in which exactly one
vertex has indegree 0 and outdegree 2 (the root) and all other vertices have either indegree 1 and outdegree 2
(\emph{split vertices}), indegree 2 and outdegree 1 (\emph{reticulation vertices}, or \emph{reticulations} for short)
or indegree 1 and outdegree 0 (\emph{leaves}), where the leaves are distinctly labelled. A phylogenetic network
without reticulations is called a \emph{phylogenetic tree}.\\
\\
\noindent A directed acyclic graph is \emph{connected} (also called ``weakly connected'') if there is an undirected
path between any two vertices and \emph{biconnected} if it contains no vertex whose removal disconnects the graph. A
\emph{biconnected component} of a network is a maximal biconnected subgraph and is called \emph{trivial} if
it is equal to two vertices connected by an arc. To avoid ``redundant'' networks we assume that in any network
every nontrival biconnected component has at least three outgoing arcs. We call an arc $a=(u,v)$ of a network $N$
a \emph{cut-arc} if its removal disconnects $N$ and call it \emph{trivial} if $v$ is a leaf.
\begin{definition}
A network is said to be a \emph{level-$k$ network} if each biconnected component contains at most $k$ reticulations.
\end{definition}
A level-$k$ network that contains no nontrivial cut-arcs and is not a level-$(k-1)$ network is called a \emph{simple}
level-$k$ network\footnote{This definition is equivalent to Definition~4 in \cite{recomb} by Lemma~2 in
\cite{recomb}.}. Informally, a simple network thus consists of a nontrivial biconnected component with leaves
``hanging'' of it.\\
\\
A \emph{triplet} $xy|z$ is a phylogenetic tree on the leaves $x$, $y$ and $z$ such that the lowest common ancestor of
$x$ and $y$ is a proper descendant of the lowest common ancestor of $x$ and $z$. The triplet $xy|z$ is displayed in
Figure~\ref{fig:singletriplet}. Denote the set of leaves in a network $N$ by $L_N$. For any set $T$ of triplets define
$L(T) = \bigcup_{t \in T} L_t$ and let $n=|L(T)|$. A set $T$ of triplets is called \emph{dense} if for each $\{x,y,z\}
\subseteq L(T)$ at least one of $xy|z$, $xz|y$ and $yz|x$ belongs to $T$.\\
\\
For a set of triplets $T$ and a set of leaves $L'\subseteq L(T)$, we denote by $T|L'$ the triplets $t\in T$ with $L_t
\subseteq L'$. Furthermore, if $\mathcal{C}=\{S_1,\ldots,S_q\}$ is a collection of leaf-sets we use
$T\nabla\mathcal{C}$ to denote the \emph{induced} set of triplets $S_i S_j |S_k$ such that there exist $x\in S_i$,
$y\in S_j$, $z\in S_k$ with $xy|z\in T$ and $i$, $j$ and $k$ all distinct.

\begin{definition}
\label{def:con} A triplet $xy|z$ is \emph{consistent} with a network $N$ (interchangeably: $N$ is consistent with
$xy|z$) if $N$ contains a subdivision of $xy|z$, i.e. if $N$ contains vertices $u \neq v$ and pairwise internally
vertex-disjoint paths $u \rightarrow x$, $u \rightarrow y$, $v \rightarrow u$ and $v \rightarrow z$.
\end{definition}

\noindent The above definitions enable us to give a formal description of the problems we consider.\\

\noindent\textbf{Problem:} Minimum Reticulation Level-$k$ network on dense triplet sets (DMRL-$k$).\\
\textbf{Input:} dense set of triplets $T$.\\
\textbf{Output:} level-$k$ network $N$ that is consistent with $T$ (if such a network exists) and has a minimum number
of reticulations over all such networks.\\
\\
A feasible solution to DMRL-1 can be found by the algorithm in \cite{JS2} or \cite{JS1} and the algorithm in
\cite{recomb} finds a feasible solution to DMRL-2. To show that these algorithms do not always minimise the number of
reticulations, consider a triplet set over an odd number $n$ of leaves, labelled $1,\ldots,n$, containing all triplets
$ab|c$ with $a,b>c$ and the triplets $a(a+1)|n$ with $a=1,3,\ldots,n-2$. The aforementioned algorithms find for this
input set a level-1 network with $\frac{n-1}{2}$ reticulations. However, a level-1 network with just one reticulation
is also possible and our algorithm MARLON, introduced shortly, is able to find that network. See Figure~\ref{fig:counterex} for an example
for $n=9$.

\begin{figure}[h]\begin{minipage}[b]{0.5\textwidth}
{\centerline{\includegraphics[scale=.3]{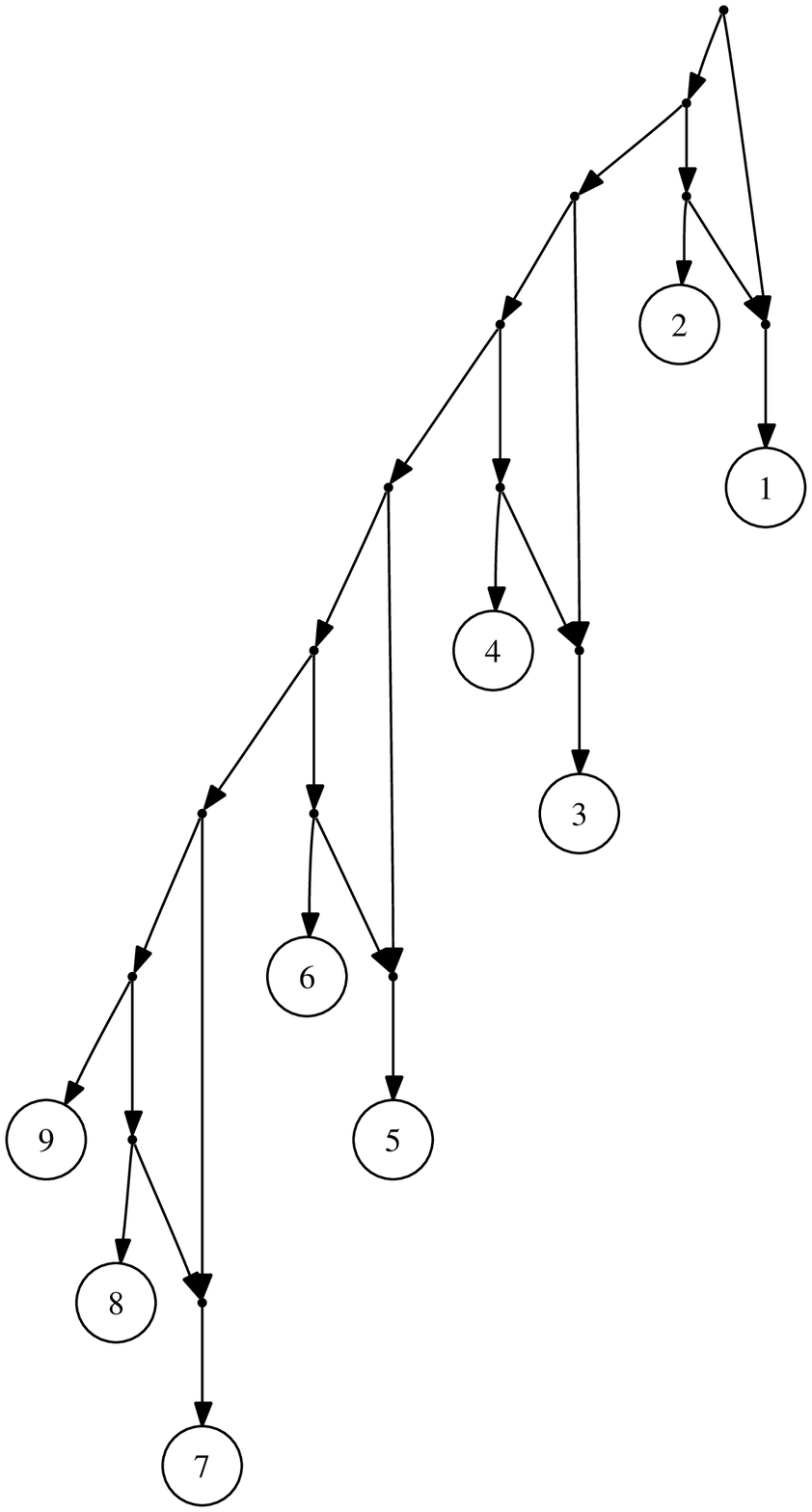}}}\end{minipage}
\begin{minipage}[b]{0.5\textwidth}
{\centerline{\includegraphics[scale=.3]{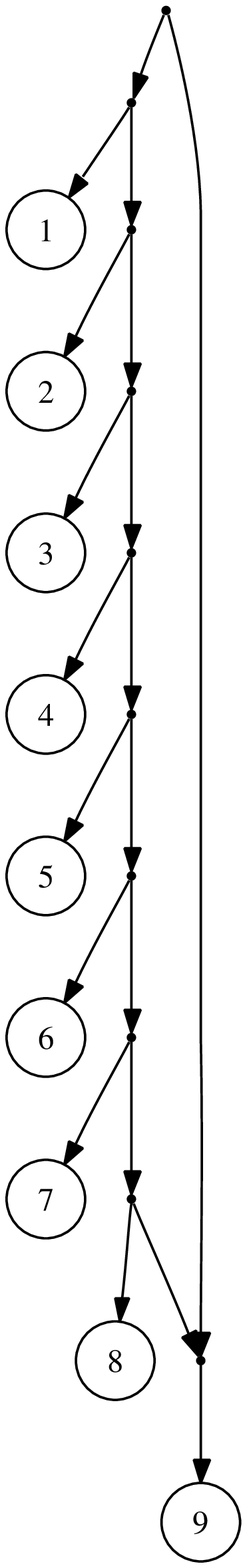}}}\end{minipage} \caption{Example of a situation where previous
algorithms (by Jansson et al. \cite{JS2} and Van Iersel et al. \cite{recomb}) construct a network like the one to the
left with $\frac{n-1}{2}$ reticulations, while MARLON constructs the network to the right, with just one
reticulation.}\label{fig:counterex}\vspace{-.3cm}
\end{figure}

\noindent Given a network $N$ let $T(N)$ denote the set of all triplets consistent with $N$. We say that a
network $N$ \emph{reflects} a triplet set $T$ if $T(N)=T$. If, for a triplet set $T$, there exists a network $N$
that reflects it, we say that $T$ is \emph{reflective}. The second problem we consider is thus the following:\\
\\
\noindent\textbf{Problem:} MIN-REFLECT-$k$\\
\textbf{Input:} set of triplets $T$.\\
\textbf{Output:} level-$k$ network $N$ that reflects $T$ (if such a network exists) and has the smallest possible
level over all such networks.\\

\noindent Note that this problem is closely related to the mixed triplets problem (MT) studied in \cite{forbid}, which
asks for a phylogenetic network consistent with an input triplet set $T$ and \emph{not} consistent with another input
triplet set $F$. Namely, MIN-REFLECT-$k$ is a special case of MT restricted to level-$k$ networks where the set $F$ of
forbidden triplets contains all triplets that are not in $T$.\\

\noindent To describe our algorithms we need to introduce some more definitions. We say that a cut-arc is a
\emph{highest cut-arc} if it is not reachable from any other cut-arc. We call a cycle containing the root a
\emph{highest cycle} and a reticulation in such a cycle a \emph{highest reticulation}. We say that a leaf $x$ is
\emph{below} an arc $(u,v)$ (and \emph{below} vertex $v$) if $x$ is reachable from $v$. In the next section we will
frequently use the set $BHR(N)$, which denotes the set of leaves in network $N$ that is below a highest
reticulation.\\
\\
A subset $S$ of the leaves is an \emph{SN-set} (w.r.t. triplet set $T$) if there is no triplet $xy|z$ in $T$ with
$x,z\in S$, $y\notin S$. An SN-set is called \emph{nontrivial} if it does not contain all leaves. We say that an
SN-set $S$ is \emph{maximal} (under restriction $X$) if there is no nontrivial SN-set (satisfying restriction $X$)
that is a strict superset of $S$. An SN-set of size 1 is called a \emph{singleton} SN-set.\\
\\
Any two SN-sets w.r.t. a dense triplet set are either disjoint or one is included in the other \cite[Lemma~8]{JS1},
which leads to the following definition. The \emph{SN-tree} is a directed tree with vertices with outdegree greater or
equal to two, such that the SN-sets of $T$ correspond exactly to the sets of leaves reachable from a vertex of the
SN-tree. It follows that there are at most $2(n-1)$ nontrivial SN-sets in a dense triplet set $T$. All these SN-sets
can be found by constructing the SN-tree in $O(n^3)$ time \cite{JS2}. If a network is consistent with a dense triplet
set $T$, then the set of leaves $S$ below any cut-arc is always an SN-set, since triplets of the form $xy|z$ with
$x,z\in S$, $y\notin S$, are not consistent with such a network. Furthermore, each maximal SN-set is equal to the
union of leaves below one or more highest cut-arcs \cite[Lemma~5]{arxiv}.

\section{Constructing a Level-1 Network with a Minimum Number of Reticulations}\label{sec:lev1}

We propose the following dynamic programming algorithm for solving DMRL-1. The algorithm considers all SN-sets from
small to large and computes an optimal solution $N_S$ for each SN-set $S$, based on the optimal solutions for included
SN-sets. The algorithm considers both the case where the root of $N_S$ is contained in a cycle and the case where
there are two cut-arcs leaving the root. In the latter case there are two SN-sets $S_1$ and $S_2$ that are maximal
under the restriction that they are a subset of $S$. If this is the case then the algorithm constructs a candidate for
$N_S$ by creating a root connected to the roots of $N_{S_1}$ and $N_{S_2}$.\\
\\
The other possibility is that the root of $N_S$ is contained in some cycle. For this case the algorithm tries each
SN-set as $BHR(N_S)$: the set of leaves below the highest reticulation. The sets of leaves below other highest
cut-arcs can then be found using the property of optimal level-1 networks outlined in Lemma~\ref{lem:snsets}.
Subsequently, an induced set of triplets is computed, where each set of leaves below a highest cut-arc is replaced by
a single meta-leaf. A candidate network is constructed by computing a simple level-1 network and replacing each
meta-leaf $S_i$ by an optimal network $N_{S_i}$ for the corresponding subset of the leaves. The optimal network $N_S$
is then the network with a minimum number of reticulations over all computed networks.\\
\\
A structured description of the computations is in Algorithm~\ref{alg:MARLON}. We use $f(L')$ to denote the minimum
number of reticulations in any level-1 network consistent with $T|L'$. In addition, $g(L',S')$ denotes the minimum
number of reticulations in any level-1 network consistent with $T|L'$ with $BHR(N)=S'$. The algorithm first computes
the optimal number of reticulations. Then a network with this number of reticulations is constructed using
backtracking.\\

\begin{algorithm}[h]
\caption{MARLON (Minimum Amount of Reticulation Level One Network)} \label{alg:MARLON}
\begin{algorithmic} [1]
\STATE compute the set $SN$ of SN-sets w.r.t. $T$\\
\FOR{$i=1\ldots n$} \FOR{each $S$ in $SN$ of cardinality $i$}

\FOR{each $S'\in SN$ with $S'\subset S$}

\STATE let $\mathcal{C}$ contain $S'$ and all SN-sets that are maximal under the restriction that they are a subset of $S$ and do not contain $S'$\\

\IF{$T\nabla \mathcal{C}$ is consistent with a simple level-1 network}

\STATE $g(S,S') := 1 + \sum_{X\in \mathcal{C}} f(X)$\\

\ENDIF

\ENDFOR

\IF{there are exactly two SN-sets $S_1,S_2\in SN$ that are maximal under the restriction that they are a strict subset
of $S$}

\STATE $g(S,\emptyset) := f(S_1) + f(S_2)$ ($\mathcal{C} := \{S_1,S_2\}$)\\

\ENDIF

\STATE $f(S) := \min g(S,S')$ over all computed values of $g(S,\cdot)$\\

\STATE store the optimal $\mathcal{C}$ and the corresponding simple level-1 network\\

\ENDFOR \ENDFOR

\STATE construct an optimal network by backtracking.\\

\end{algorithmic}
\end{algorithm}

\noindent To show that the algorithm indeed computes an optimal solution we need the following crucial property of
optimal level-1 networks.

\begin{lemma}\label{lem:snsets}
If there exists a solution to DMRL-1, then there also exists an optimal solution $N$, where the sets of leaves below
highest cut-arcs equal either (i) $BHR(N)$ and the SN-sets that are maximal under the restriction that they do not
contain $BHR(N)$, or (ii) the maximal SN-sets (if $BHR(N)=\emptyset$).
\end{lemma}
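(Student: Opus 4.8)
The plan is to take an arbitrary optimal level-1 network $N$ solving DMRL-1 and transform it, without increasing the number of reticulations and while preserving consistency with $T$, into a network whose highest cut-arcs carry exactly the SN-sets described in the statement. I would first dispose of case (ii): if $N$ has no highest reticulation, then the root is not in a cycle, so two cut-arcs leave the root and the leaf-sets below the highest cut-arcs are disjoint SN-sets whose union is all of $L(T)$; I would argue these must be precisely the maximal SN-sets, using the fact (quoted in the preliminaries) that each maximal SN-set is the union of leaves below one or more highest cut-arcs, together with the laminar structure of SN-sets w.r.t.\ a dense triplet set.

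For case (i), let $B = BHR(N)$ be the SN-set below the highest reticulation. The highest cut-arcs other than the one feeding the reticulation hang off the highest cycle, and each carries some SN-set $X_1,\ldots,X_m$, with $B, X_1,\ldots,X_m$ partitioning $L(T)$. I want to show I can rearrange these so that the $X_j$ become exactly the SN-sets that are maximal subject to not containing $B$. The key structural idea is that the cyclic order of the cut-arcs around the highest cycle does not affect consistency with the induced triplets in a way that forces non-maximal sets: if two distinct highest cut-arcs below the cycle carry SN-sets $X_a, X_b$ that are both contained in a single larger SN-set $X^\ast$ (with $B \not\subseteq X^\ast$), I would merge the subnetworks hanging below them into one subnetwork on $X^\ast$, hang it from a single cut-arc, and verify that the resulting network is still a valid level-1 network consistent with $T$ with no more reticulations than before. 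Repeatedly applying such merges drives the $X_j$ up to maximality.

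The merging step is where I expect the main obstacle to lie, and it is really two obstacles. First, I must check that merging does not increase the reticulation count: the merged subnetwork on $X^\ast$ can be taken to be an \emph{optimal} network $N_{X^\ast}$ for $T|X^\ast$, and I need to know $f(X^\ast)$ is at most the combined reticulation count of the pieces it replaces, which should follow because the old pieces together formed \emph{some} level-1 network on $X^\ast$ (the portion of $N$ below the cycle restricted to those arcs), so optimality of $N_{X^\ast}$ gives the inequality. Second, and more delicate, I must confirm that contracting several highest cut-arcs into one does not destroy consistency with any triplet of $T$ and does not create a triplet that was absent — i.e.\ that every triplet $xy|z \in T$ is still displayed. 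Triplets with all three leaves inside $X^\ast$ are handled by $N_{X^\ast}$ reflecting/consisting the correct induced triplets; triplets straddling the merge boundary need the argument that, because $X^\ast$ is an SN-set, no triplet of the form $xy|z$ with $x,z \in X^\ast$, $y \notin X^\ast$ exists in $T$, so collapsing $X^\ast$ to a single meta-leaf loses no required triplet.

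Finally I would assemble these local moves into a global argument: start from an optimal $N$, apply the case-split, and in case (i) perform the merges until every $X_j$ is maximal under the restriction of not containing $B$. Because each move preserves optimality and consistency and strictly reduces the number of highest cut-arcs (or leaves miscategorised), the process terminates at a network of the desired form, which establishes the lemma. The cleanest way to present this is probably to phrase the merges as ``whenever two leaf-sets below highest cut-arcs lie in a common SN-set avoiding $B$, they can be placed below a common cut-arc,'' and then invoke the laminarity of SN-sets to conclude that the resulting highest cut-arc leaf-sets are exactly the claimed maximal SN-sets.
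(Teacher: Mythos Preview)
Your handling of case~(ii) is fine and matches the paper. For case~(i), however, there is a genuine gap. The crucial step is your claim that ``the old pieces together formed some level-1 network on $X^\ast$,'' which you use to bound $f(X^\ast)$. But the subnetworks below the highest cut-arcs contained in $X^\ast$ hang from \emph{different} vertices of the highest cycle; to assemble them into a single network consistent with $T|X^\ast$ (e.g.\ as a caterpillar in the cycle order), and to know where on the cycle to hang the merged $N_{X^\ast}$ so that triplets with exactly one leaf in $X^\ast$ remain consistent, you need the attachment points to lie on a \emph{contiguous} directed path along one side of the cycle. You never establish this, and without it neither the reticulation bound nor the placement of the merged block is justified. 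Your consistency check is also incomplete: you only argue for triplets with two or three leaves inside $X^\ast$, whereas the non-obvious case is a triplet with one leaf inside and two outside, whose consistency depends precisely on where $X^\ast$ sits on the cycle.

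The paper proceeds differently and avoids these difficulties entirely. It first proves a structural claim (Claim~1): every maximal SN-set equals either the leaf set below a single highest cut-arc or the leaf set below a contiguous directed path on the highest cycle that ends in the reticulation or in one of its parents. The argument shows that otherwise one could exhibit a leaf $z\notin S$ lying ``between'' two leaves of $S$ on the cycle, and the only triplet on those three leaves consistent with $N$ would violate $S$ being an SN-set. With this in hand the paper does \emph{not} replace any subnetwork: it performs a purely topological rewiring of the cycle (subdivide the arc entering the first vertex $p$ of the path, add an arc from the new vertex to the reticulation, delete the old arc from the last vertex $p'$ of the path to the reticulation, and suppress the resulting degree-two vertex). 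This puts the whole path---and hence the SN-set---below a single cut-arc while leaving all hanging subnetworks untouched, so the reticulation count is trivially preserved and the inequality $f(X^\ast)\le\sum_i f(X_{a_i})$ that your argument hinges on never has to be argued.
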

\begin{proof}
If $BHR(N)=\emptyset$ then there are two highest cut-arcs and the sets below them are the maximal SN-sets. Otherwise,
the root of $N$ is part of a cycle. Let $S$ be a maximal SN-set. We prove the following.
\begin{claim}[1]
Maximal SN-set $S$ equals either the set of leaves below a highest cut-arc or the set of leaves below a directed path
$P$ ending in the highest reticulation or in one of its parents.
\end{claim}
\begin{proof}
If $S$ equals the set of leaves below a single highest cut-arc then we are done. From now on assume that $S$ equals
the set of leaves below different highest cut-arcs. First observe that no two leaves in $S$ have the root as their
lowest common ancestor, since this would imply that \emph{all} leaves are in $S$, because $S$ is an SN-set. From this
follows that all leaves in $S$ are below some directed path $P$ on the highest cycle. First assume that not all leaves
reachable from vertices in $P$ are in $S$. Then there are leaves $x,z,y$ reachable respectively from vertices
$p_1,p_2,p_3$ that are on $P$ (in this order) with $x,y\in S$ and $z\notin S$. But this leads to a contradiction
because then the triplet $xy|z$ is not consistent with $N$, whilst $yz|x$ and $xz|y$ cannot be in $T$ since $S$ is an
SN-set. It remains to prove that $P$ ends in either the highest reticulation or in one of its parents. Assume that
this is not true, then there exists a vertex $v$ on (the interior of) a path from the last vertex of $P$ to the
highest reticulation. Consider some leaf $z\notin S$ reachable from $v$ and some leaves $x,y\in S$ below different
highest cut-arcs. Then this again leads to a contradiction because $xy|z$ is not consistent with $N$. This concludes
the proof of the claim. \qed
\end{proof}
First suppose that a maximal SN-set $S$ equals the set of leaves below a directed path $P$ ending in a parent of the
highest reticulation. In this case we can modify the network by putting $S$ below a single cut-arc, without increasing
the number of reticulations. To be precise, if $p$ and $p'$ are the first and last vertex of $P$ respectively and $r$
is the highest reticulation, then we subdivide the arc entering $p$ by a new vertex $v$, add a new arc $(v,r)$, remove
the arc $(p',r)$ and suppress the resulting vertex with indegree and outdegree both equal to one. It is not too
difficult to see that the resulting network is still consistent with $T$.\\

\begin{figure}[h]\centering
{\centerline{\includegraphics[scale=.6]{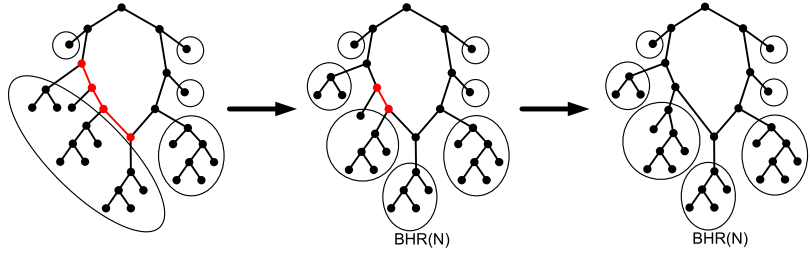}}}\caption{Visualisation of the proof of Lemma~\ref{lem:snsets}. From
the maximal SN-sets (encircled in the network on the left) to the sets of leaves below highest cut-arcs (encircled in
the network on the right). Remember that all arcs are directed downwards.}\label{fig:lev1ex}
\end{figure}

\noindent Now suppose that some maximal SN-set $S$ equals the set of leaves below a directed path $P$ ending in the
highest reticulation. The sets of leaves below highest cut-arcs are all SN-sets (as is always the case). One of them
is equal to $BHR(N)$. If any of the others is contained in a nontrivial SN-set $S'$ that does not contain $BHR(N)$,
then the procedure from the previous paragraph can again be used to put $S'$ below a highest cut-arc. In the resulting
network the sets of leaves below highest cut-arcs are indeed equal to $BHR(N)$ and the SN-sets that
are maximal under the restriction that they do not contain $BHR(N)$.\\
\\
An example is given in Figure~\ref{fig:lev1ex}. In the network on the left one maximal SN-set equals the set of leaves
below the red path. In the middle is the same network, but now we encircled $BHR(N)$ and the SN-sets that are maximal
under the restriction that they do not contain $BHR(N)$. There is still an SN-set ($S'$) below a path on the cycle
(again in red). However, in this case the network can be modified by putting $S'$ below a single cut-arc, without
increasing the number of reticulations. This gives the network to the right, where the sets of leaves below highest
cut-arcs are indeed equal to $BHR(N)$ and the SN-sets that are maximal under the restriction that they do not contain
$BHR(N)$.\qed
\end{proof}

\begin{theorem}\label{thm:lev1}
Given a dense set of triplets $T$, algorithm MARLON constructs a level-1 network that is consistent with $T$ (if such
a network exists) and has a minimum number of reticulations in $O(n^5)$ time.
\end{theorem}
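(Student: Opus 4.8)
The plan is to establish two things separately: correctness (the algorithm outputs a level-1 network consistent with $T$ with the minimum possible number of reticulations, whenever one exists) and the $O(n^5)$ running time. For correctness I would argue by induction on the cardinality of the SN-set $S$ being processed, proving that after the inner loop completes, $f(S)$ equals the true minimum number of reticulations in any level-1 network consistent with $T|S$, and that $g(S,S')$ correctly records the minimum under the restriction $BHR(N)=S'$. The base case is a singleton SN-set, which needs zero reticulations. For the inductive step I would split on the two structural possibilities for the root of the optimal network $N_S$ guaranteed by Lemma~\ref{lem:snsets}: either the root has two cut-arcs leaving it (the $BHR(N)=\emptyset$ case, handled by the $g(S,\emptyset)$ line using the two maximal sub-SN-sets $S_1,S_2$), or the root lies on a cycle, in which case Lemma~\ref{lem:snsets} tells us the sets below the highest cut-arcs are exactly $BHR(N)=S'$ together with the SN-sets maximal under the restriction of not containing $S'$. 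This is precisely the collection $\mathcal{C}$ the algorithm builds for each candidate $S'$, so the algorithm examines an optimal decomposition.

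The crux of the correctness argument is to verify that the recurrence $g(S,S') := 1 + \sum_{X\in\mathcal{C}} f(X)$ is valid, i.e.\ that the optimal network decomposes additively. Here I would appeal to the cut-arc structure: each set $X\in\mathcal{C}$ hangs below a highest cut-arc, so the subnetwork below that cut-arc is itself a level-1 network consistent with $T|X$, contributing $f(X)$ reticulations by the inductive hypothesis, while the simple level-1 network on the top biconnected component contributes exactly one reticulation (since it is level-1 and nontrivial). Conversely, given optimal subnetworks $N_X$ and a simple level-1 network consistent with $T\nabla\mathcal{C}$, splicing them together yields a level-1 network consistent with $T|S$; I would need to confirm consistency is preserved, which follows because any triplet of $T|S$ either lies entirely within one block $X$ (handled by $N_X$) or spans distinct blocks (handled by the induced triplet in $T\nabla\mathcal{C}$, consistency with which is guaranteed by the simple-level-1 test in the \textbf{if} on line~5). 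The one subtlety I would flag is the existence question: the algorithm must output ``no network exists'' correctly, so I would note that $f(S)$ is left undefined exactly when no valid decomposition is found, and that Lemma~\ref{lem:snsets} guarantees that if any consistent level-1 network exists then at least one of the examined decompositions succeeds.

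For the running time I would count as follows. There are $O(n)$ SN-sets (since there are at most $2(n-1)$ nontrivial ones plus the singletons), so the outer double loop over $S$ runs $O(n)$ times, and the inner loop over $S'\subset S$ runs $O(n)$ times, giving $O(n^2)$ iterations of the body. The dominant cost inside the body is the test on line~5, deciding whether the induced triplet set $T\nabla\mathcal{C}$ is consistent with a simple level-1 network and, if so, constructing it. I would invoke the fact that a simple level-1 network on $m$ meta-leaves can be recognised and built in polynomial time (using the characterisation of simple level-1 networks and the algorithm of Jansson et al.), and I would need to bound both the size of $T\nabla\mathcal{C}$ and the cost of this recognition step; computing the induced triplets takes $O(n^3)$ time and the simple-network test is no worse, so each body execution costs $O(n^3)$. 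Multiplying $O(n^2)$ iterations by $O(n^3)$ per iteration yields the claimed $O(n^5)$ bound, with the SN-tree construction ($O(n^3)$) and the final backtracking both absorbed into this total.

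I expect the main obstacle to be the correctness of the recurrence rather than the timing: specifically, pinning down that the additive decomposition in line~6 never overcounts or undercounts reticulations and that splicing preserves consistency in both directions. Lemma~\ref{lem:snsets} does the heavy lifting by constraining the possible shapes of an optimal network to exactly the family of decompositions the algorithm enumerates, so the remaining work is the careful but essentially routine verification that the ``build simple network on meta-leaves, then substitute optimal subnetworks'' operation is sound, together with the matching lower bound showing no better network can exist.
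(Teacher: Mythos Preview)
Your proposal is correct and follows essentially the same approach as the paper: induction on $|S|$, case split on whether $BHR(N)=\emptyset$ via Lemma~\ref{lem:snsets}, the additive recurrence $g(S,S')=1+\sum_{X\in\mathcal{C}}f(X)$, and the same $O(n^2)\cdot O(n^3)=O(n^5)$ running-time count. If anything you are more explicit than the paper's proof, which omits the consistency-preservation argument for the splicing step and the handling of the infeasible case; the paper also adds the small observation that $BHR(N)$ must itself be an SN-set (justifying the loop over $S'\in SN$), which you use implicitly.
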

\begin{proof}
The proof is by induction on the size $i$ of $S$. Suppose that $N$ is an optimal level-1 network consistent with
$T|S$. If $BHR(N)=\emptyset$ then the sets of leaves below highest cut-arcs are the two maximal SN-sets $S_1$ and
$S_2$. In this case $f(S)$ can be computed by adding up the $f(S_1)$ and $f(S_2)$. Otherwise, it follows from
Lemma~\ref{lem:snsets} and the observation that $BHR(N)$ has to be an SN-set, that at some iteration the algorithm
will consider the set $\mathcal{C}$ equal to the sets of leaves below the highest cut-arcs of $N$. In this case the
number of reticulations can be computed by adding one to the sum of the values $f(X)$ over all $X\in \mathcal{C}$.
This is because the network $N$ consists of a (highest) cycle, connected to optimal networks for the different $X\in
\mathcal{C}$. By induction, all values of $f(X)$ for $|X|<i$ have been computed correctly and correctness of the
algorithm follows. The number of SN-sets is $O(n)$ because any two SN-sets are either disjoint or one is included in
the other \cite[Lemma~8]{JS1}. These SN-sets can be found in $O(n^3)$ time by computing the SN-tree \cite{JS2}. Simple
level-1 networks can be found in $O(n^3)$ time \cite{JS2} and $T\nabla \mathcal{C}$ can be computed in $O(n^3)$ time.
These computations are repeated $O(n^2)$ times: for all $S\in SN$ and all $S'\in SN$ with $S'\subset S$. Therefore,
the total running time is $O(n^5)$. \qed
\end{proof}

\section{Experiments}\label{sec:experiments}
MARLON has been implemented, tested and made publicly available \cite{marlon}. For example the network in
Figure~\ref{fig:marlon} with 80 leaves and 13 reticulations could be constructed by MARLON in less than six minutes on
a Pentium IV 3 GHz PC with 1 GB of
RAM.\\

\begin{figure}[t]\centering
{\centerline{\includegraphics[scale=.25]{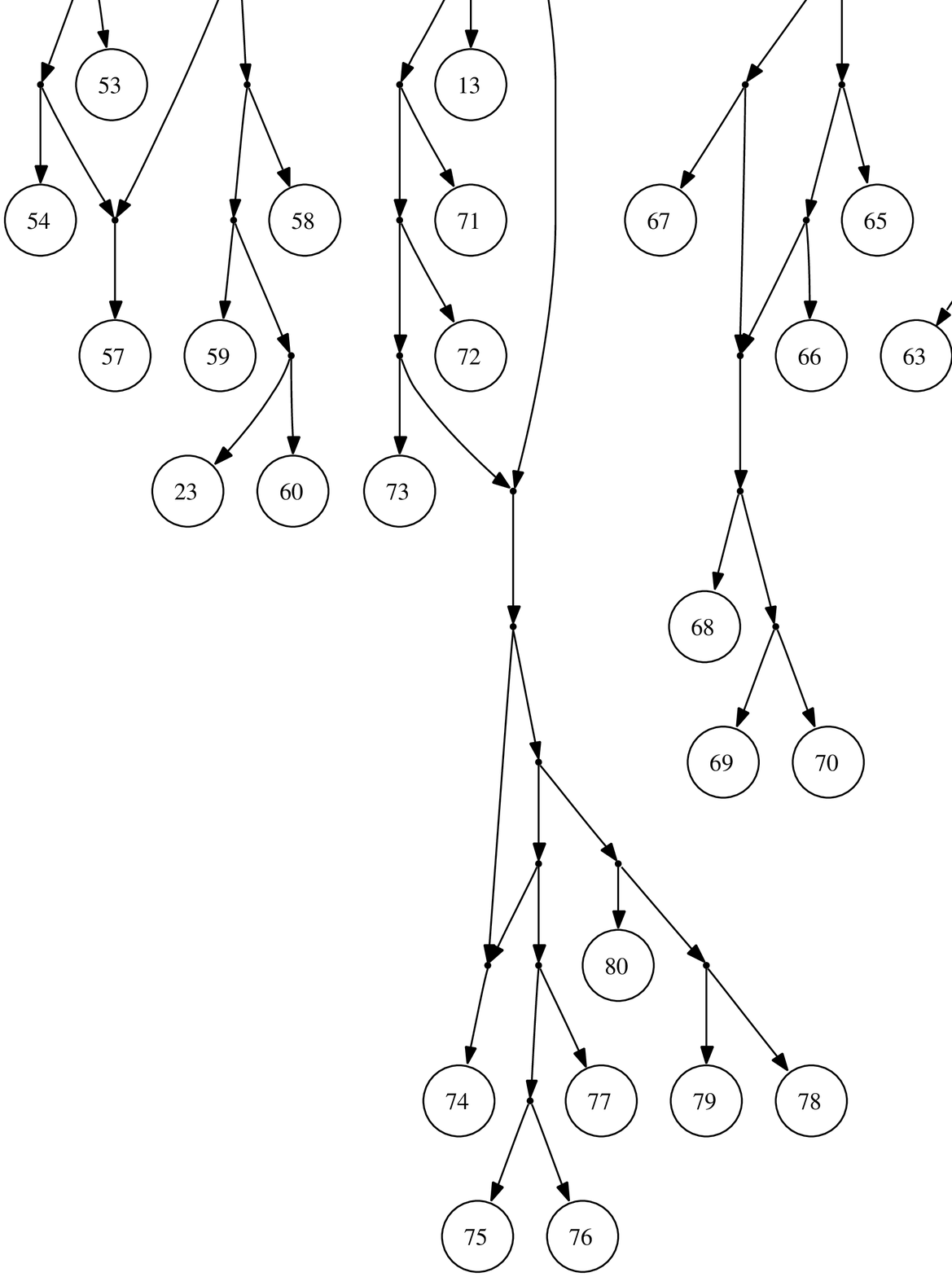}}}\caption{Example of a network constructed by
MARLON.}\label{fig:marlon}
\end{figure}

\noindent To test the relevance of the constructed networks we applied MARLON to simulated data. The main advantage of
using simulated data is that it enables us to compare output networks with the ``real'' network. We repeated the
following experiment for different level-1 networks, which we in turn assumed to be the ``real'' network. Given such a
level-1 network, we used the program Seq-Gen \cite{seqgen} to simulate sequences that could have evolved according to
that network. We simulated a recombinant phylogeny by generating sequences of 4000 base pairs, consisting of two
blocks of 2000 base pairs each. We assumed that each block evolved according to a phylogenetic tree. This means that
in each simulation, our input to Seq-Gen consisted of two trees $T_1$ and $T_2$. For each reticulation of the level-1
network, $T_1$ uses just one of the incoming arcs and $T_2$ uses the other one. This makes sure that each arc of the
network is used by at least one of the two trees. Seq-Gen was used with the F81 model of nucleotide
substitution.\\

\begin{figure}[t]\centering
\vspace{-1cm} {\centerline{\includegraphics[scale=.28]{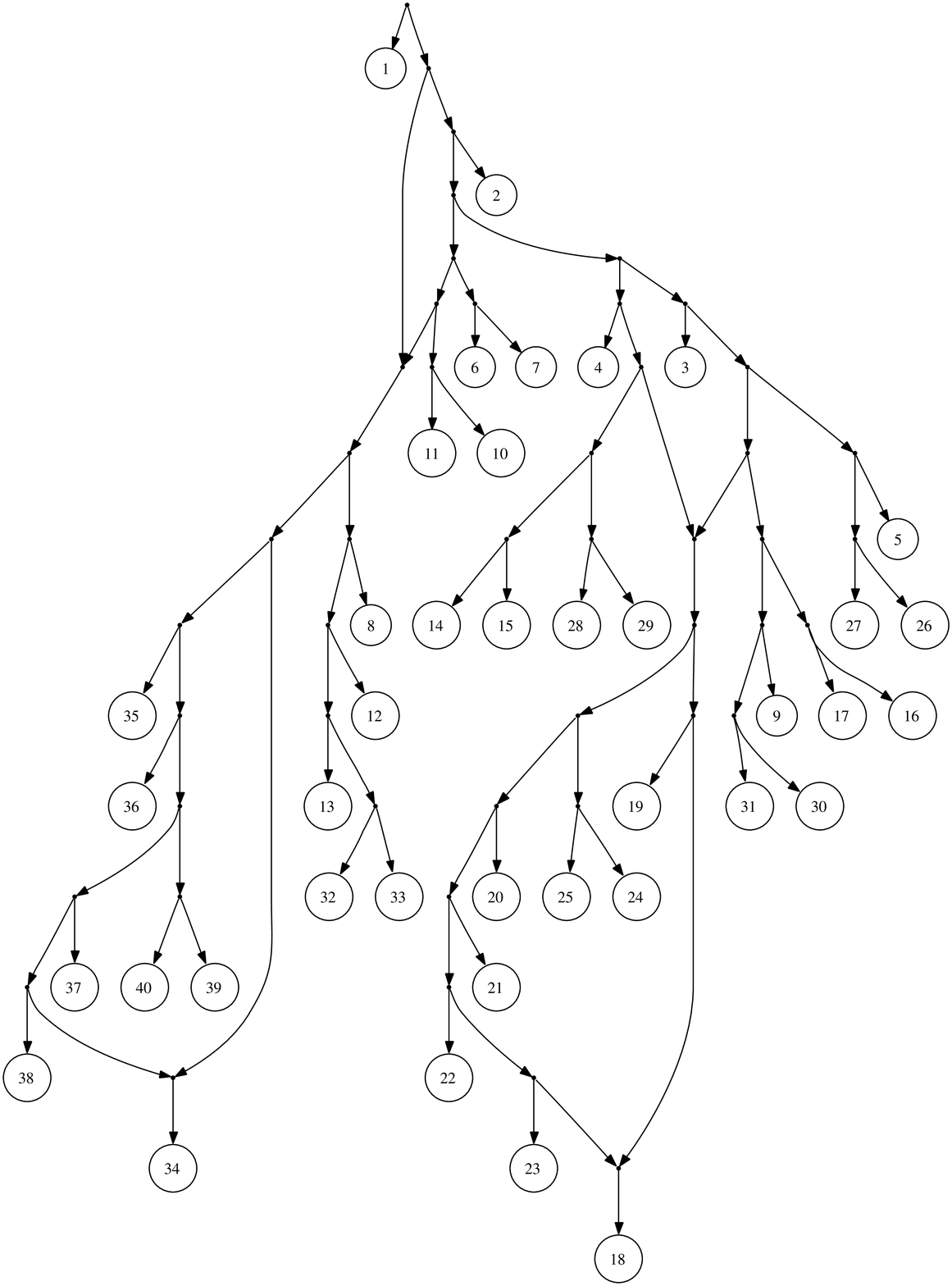}}}\caption{The level-1 network on which the simulated
triplet set $T^*$ is based.}\label{fig:input40}
\end{figure}

\begin{figure}[t]\centering
\vspace{-1cm} {\centerline{\includegraphics[scale=.28]{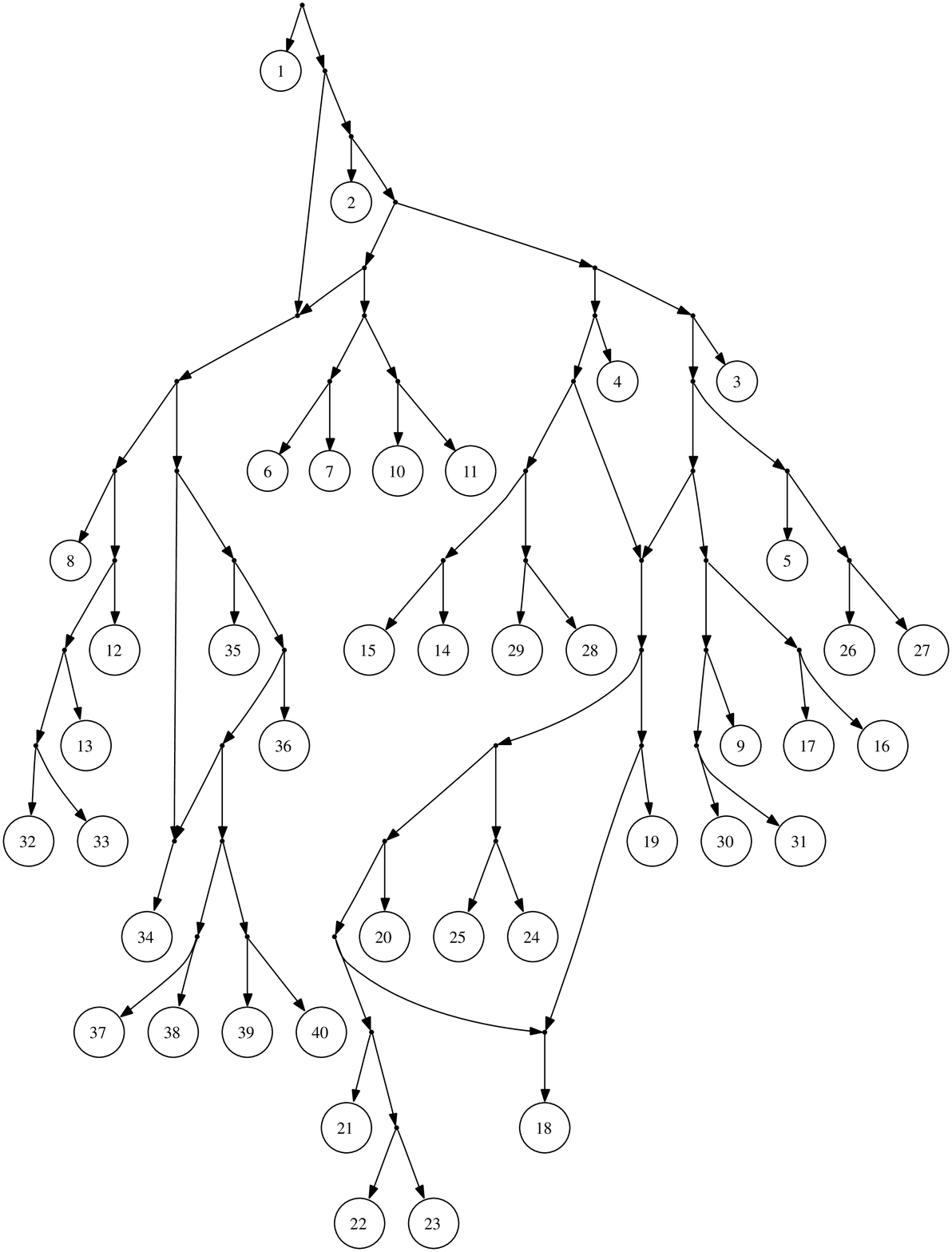}}}\caption{The network constructed by MARLON for the
simulated triplet set $T^*$.}\label{fig:marlon40}
\end{figure}

\noindent From these simulated sequences we computed a set of triplets as follows. We assume that for one sequence it
is known that it is only distantly related to the others. This is called the \emph{outgroup} sequence. For each
combination of three sequences, plus the outgroup sequence, we computed a phylogenetic tree using the maximum
likelihood method PHYML
\cite{phyml}. The output trees of PHYML give a dense triplet set, which we used as input to MARLON.\\

\noindent All simulations gave similar results. Here we describe the results for one specific ``real'' level-1
network, displayed in Figure~\ref{fig:input40}. We obtained the simulated triplet set $T^*$ based on this network by
the procedure described above. For this triplet set MARLON constructed the output network in
Figure~\ref{fig:marlon40}. The constructed network is very similar to the input network (which we assumed to be the
``real'' network). Both networks have four reticulations and also the branching structure is almost identical. The
only differences are all of the following type. The output network contains some subnetworks rooted below a parent of
a reticulation. In some of these cases the input network is a bit different because here the subnetwork is divided
below a path on the cycle, ending in the parent of the reticulation. For example in Figure~\ref{fig:input40} the
leaves 37, 38, 39, 40 are below a path on a cycle consisting of three vertices. However, in the output network in
Figure~\ref{fig:marlon40} these leaves are below a single vertex on the cycle.\\

\noindent Other simulations give similar results. The networks constructed by MARLON are almost identical to the input
networks, except for some small differences that are almost all of the type described above. In one case the output
network also contained an extra reticulation that was not present in the input network. In this case there must have
been triplets in the simulated triplet set that were not consistent with the input network.\\
\\
We conclude that MARLON correctly constructs level-1 networks and works very fast. For simulated data the produced
networks are very close to the ``real'' networks used to generate the simulated sequences. When using real data we
expect the amount of incorrect triplets to be larger and hence the results possibly less impressive. In addition, real
data sets will not always originate from a level-1 network, in which case MARLON will not be able to compute a
solution. This problem will partly be solved in the next section where we show how the approach can be extended to
level-2. However, the main conclusion to be drawn from the experiments is that, if the data is good enough, our method
is indeed able to produce good estimates of evolutionary histories. This for example shows that, when a set of
triplets is computed from sequence data, sufficient information is retained to be able to reconstruct the phylogenetic
network accurately. In addition, MARLON provides a very fast method to combine these triplets into a phylogenetic
network.

\vspace{2cm} \pagebreak

\section{Constructing a Level-2 Network with a Minimum Number of Reticulations}\label{sec:lev2}
This section extends the approach from Section~\ref{sec:lev1} to level-2 networks. We describe how one can find a
level-2 network consistent with a dense input triplet set containing a minimum number of reticulations, or decide that
such a network does not exist.\\
\\
The general structure of the algorithm is the same as in the level-1 case. We loop though all SN-sets $S$ from small
to large and compute an optimal solution $N_S$ for that SN-set, based on previously computed optimal solutions for
included SN-sets. For each SN-set we still consider, like in the level-1 case, the possibility that there are two
cut-arcs leaving the root of $N_S$ and the possibility that this root is in a biconnected component with one
reticulation. However, now we also consider a third possibility, that the root of $N_S$ is in a biconnected component
containing two reticulations.\\
\\
In the construction of biconnected components with two reticulations, we use the notion of
``non-cycle-reachable''-arc, or n.c.r.-arc for short, introduced in \cite{reflections}. We call an arc $a=(u,v)$ an
\emph{n.c.r.-arc} if $v$ is not reachable from any vertex in a cycle. These n.c.r.-arcs will be used to combine
networks without increasing the network level. In addition, we use the notion \emph{highest biconnected
component} to denote the biconnected component containing the root.\\

\begin{figure}[h]\centering
{\centerline{\includegraphics[scale=.8]{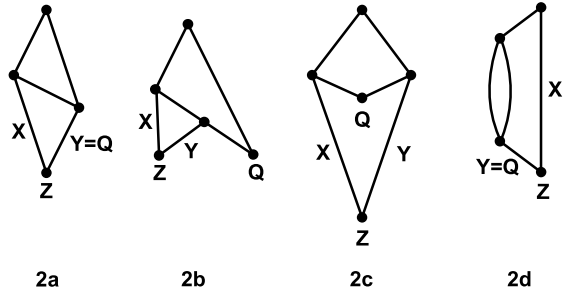}}}\caption{The four possible structures of a biconnected component
containing two reticulations.}\label{fig:level2} \label{fig:2gen}
\end{figure}

\noindent Our complete algorithm is described in detail in Algorithm~\ref{alg:MARLTN}. To get an intuition of why the
algorithm works, consider the four possible structures of a biconnected component containing two reticulations
displayed in Figure~\ref{fig:level2}. Let $X$, $Y$, $Z$ and $Q$ be the sets of leaves indicated in
Figure~\ref{fig:level2} in the graph that displays the form of the highest biconnected component of $N_S$. Observe
that after removing $Z$ in each case $X$, $Y$ and $Q$ become a set of leaves below a cut-arc and hence an SN-set
(w.r.t $T|(S\setminus Z)$). In cases 2a, 2b and 2c the highest biconnected component becomes a cycle, $Q$ the set of
leaves below the highest reticulation and $X$ and $Y$ sets of leaves below highest cut-arcs. We will first
describe the approach for these cases and show later how a similar technique is possible for case 2d.\\
\\
Our algorithm loops through all SN-sets that are a subset of $S$ and will hence at some iteration consider the SN-set
$Z$. The algorithm removes the set $Z$ and computes the SN-sets w.r.t. $T|(S\setminus Z)$. The sets of leaves below
highest cut-arcs (in some optimal solution, if one exists) are now equal to $X,Y,Q$ and the SN-sets that are maximal
under the restriction that they do not contain $X$, $Y$ or $Q$ (by the same arguments as in the proof of
Lemma~\ref{lem:snsets}). Therefore, the algorithm tries each possible SN-set for $X$, $Y$ and $Q$ and in one of these
iterations it will correctly determine the sets of leaves below highest cut-arcs. Then the algorithm computes the
induced set of triplets, where each set of leaves below a highest cut-arc is replaced by a single meta-leaf. All
simple level-1 networks consistent with this induced set of triplets are obtained by the algorithm in \cite{JS2}. Our
algorithm loops through all these networks and does the following for each simple level-1 network $N_1$. Each
meta-leaf $V$, not equal to $X$ or $Y$, is replaced by an optimal network $N_V$, which has been computed in a previous
iteration. To include leaves in $Z$, $X$ and $Y$, we compute an optimal network $N_2$ consistent with $T|(X\cup Z)$ and an
optimal network $N_3$ consistent with $T|(Y\cup Z)$ where in both networks $Z$ is the set of leaves below an n.c.r.-arc. Then
we combine these three networks into a single network like in Figure~\ref{fig:level2example}. A new reticulation is
created and $Z$ becomes the set of leaves below this reticulation. Finally, we check for each constructed network
whether it is consistent with $T|S$. The network with the minimum number of reticulations over all
such networks is the optimal solution $N_S$ for this SN-set.\\
\\
Now consider case 2d. Suppose we remove $Z$ and replace $X$, $Y$ (=$Q$) and each SN-set w.r.t. $T|(S\setminus Z)$ that
is maximal under the restriction that it does not contain $X$ or $Y$ by a single leaf. Then the resulting network
consists of a path ending in a simple level-1 network, with $X$ a child of the root and $Q$ the child of the
reticulation; and each vertex of the path has a leaf as child. Such a network can easily be constructed and
subsequently one can use the same approach as in cases 2a, 2b and 2c. See Figure~\ref{fig:level2example2} for an
example of the construction in case 2d.

\begin{figure}[t]\centering
{\centerline{\includegraphics[scale=.8]{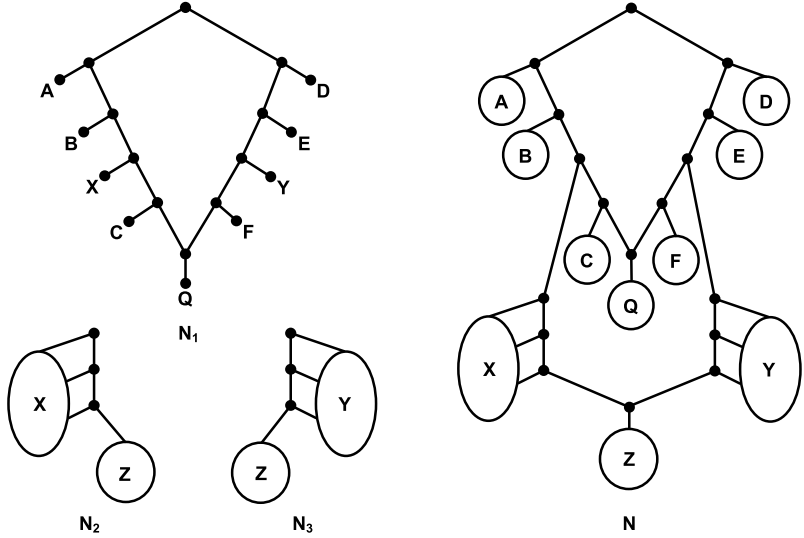}}}\caption{Example of the construction of network $N$ from
$N_1$, $N_2$ and $N_3$.}\label{fig:level2example}
\end{figure}

\begin{figure}[t]\centering
{\centerline{\includegraphics[scale=.8]{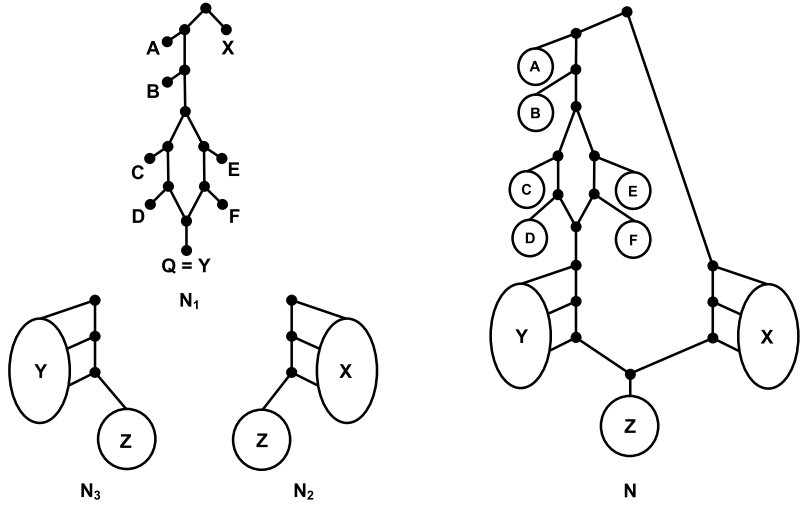}}}\caption{Example of the construction of network $N$ from $N_1$,
$N_2$ and $N_3$ in case 2d.}\label{fig:level2example2}
\end{figure}

\begin{algorithm}[t]
\caption{MARLTN (Minimum Amount of Reticulation Level Two Network)} \label{alg:MARLTN}
\begin{algorithmic} [1]
\STATE - compute the set $SN$ of SN-sets w.r.t. $T$\\
\FOR{$i=1\ldots n$} \FOR{each $S$ in $SN$ of cardinality $i$}

\FOR{each $S'\in SN$ with $S'\subset S$}

\STATE - let $\mathcal{C}$ contain $S'$ and all SN-sets that are maximal under the restriction that they are a subset
of $S$ and do not contain $S'$

\IF{$T\nabla \mathcal{C}$ is consistent with a simple level-1 network $N_1$}

\STATE - construct $N^*$ from $N_1$ by replacing each leaf $V$ by an optimal network $N_V$ constructed in a previous iteration\\

\STATE - $g(S,S')$ is the number of reticulations in $N^*$\\

\ENDIF

\ENDFOR

\IF{there are exactly two SN-sets $S_1,S_2\in SN$ that are maximal under the restriction that they are a strict subset
of $S$}

\STATE - $N^*$ consists of a root connected to the roots of optimal networks $N_{S_1}$ and $N_{S_2}$ that have been
constructed in previous iterations\\

\STATE - $g(S,\emptyset)$ is the number of reticulations in $N^*$\\

\ENDIF


\FOR{each $Z\in SN$ with $Z\subset S$}

\STATE - $T':=T|(S\setminus Z)$\\

\STATE - compute the set $SN'$ of SN-sets w.r.t. $T'$\\

\FOR{each $X,Y,Q\in SN'$}

\STATE - $\mathcal{C}$ is the collection consisting of $X,Y,Q$ and all SN-sets in $SN'$ that are maximal under the
restriction
that they do not include $X,Y$ or $Q$\\

\STATE - construct an optimal network $N_2$ consistent with $T|(X\cup Z)$ such that $Z$ is the set of leaves below an
n.c.r.-arc
$(u,v)$\\

\STATE - construct an optimal network $N_3$ consistent with $T|(Y\cup Z)$ such that $Z$ is the set of leaves below an
n.c.r.-arc $(u',v')$\\

\STATE - construct all simple level-1 networks consistent with $T'\nabla \mathcal{C}$\\

\STATE - construct all networks consistent with $T'\nabla \mathcal{C}$ that consist of a path ending in a simple
level-1 network, with $X$ a child of the root, $Q$ the child of the reticulation; and with a leaf below each
internal vertex of the path\\

\FOR{each network $N_1$ from the networks constructed in the above two lines}

\STATE - construct $N^*$ from $N_1$ by doing the following: replace $X$ by $N_2$, $Y$ by $N_3$ and each other leaf $V$
by an optimal network $N_V$ constructed in a previous iteration, then subdivide $(u,v)$ into $(u,w)$ and $(w,v)$,
delete everything below $u'$ and add an arc $(u',w)$\\

\IF{$N^*$ is consistent with $T|S$}

\STATE - $h(S,X,Y,Z,Q)$ is the number of reticulations in $N^*$\\

\ENDIF

\ENDFOR

\ENDFOR

\ENDFOR


\STATE - $f(S)$ is the minimum of all computed values of $g(S,S')$ and $h(S,X,Y,Z,Q)$\\

\STATE - store network $N_S$, which is a network $N^*$ attaining the minimum number $f(S)$ of reticulations\\

\ENDFOR \ENDFOR

\end{algorithmic}
\end{algorithm}

\begin{theorem}
Given a dense set of triplets $T$, Algorithm MARLTN constructs a level-2 network that is consistent with $T$ (if such
a network exists) and has a minimum number of reticulations in $O(n^9)$ time.
\end{theorem}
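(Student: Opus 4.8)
The plan is to follow the same inductive scheme as in the proof of Theorem~\ref{thm:lev1}, proving by induction on $|S|=i$ that the algorithm computes a level-2 network $N_S$ consistent with $T|S$ that minimises the number of reticulations, or correctly reports that none exists. The base case is a single leaf. For the inductive step I would fix an optimal level-2 network $N$ consistent with $T|S$ and argue that its highest biconnected component falls into exactly one of three types: (i) two cut-arcs leave the root; (ii) the root lies in a component with a single reticulation; or (iii) the root lies in a component with two reticulations, having one of the four shapes in Figure~\ref{fig:level2}. Types (i) and (ii) are handled verbatim as in the level-1 algorithm: in case (i) the two children of the root are the maximal SN-sets, and in case (ii) Lemma~\ref{lem:snsets} identifies the leaves below the highest cut-arcs, so the values $g(S,\emptyset)$ and $g(S,S')$ are computed correctly from optimal sub-solutions by the induction hypothesis.

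The heart of the proof is the two-reticulation case. First I would establish the structural claim stated informally in the text: if $Z$ denotes the set of leaves below the lower reticulation, then deleting $Z$ turns the highest biconnected component into a single cycle (cases 2a--2c) or into a path ending in a cycle (case 2d), and in the resulting network $X$, $Y$ and $Q$ become sets of leaves below cut-arcs, hence SN-sets with respect to $T|(S\setminus Z)$. The argument mirrors the claim in the proof of Lemma~\ref{lem:snsets}: any maximal SN-set of $T|(S\setminus Z)$ either sits below a single highest cut-arc or below a directed path on the remaining cycle, and in the latter case it may be pushed below a single cut-arc without creating reticulations. Consequently the leaves below highest cut-arcs of the reduced network are exactly $X$, $Y$, $Q$ together with the SN-sets of $T|(S\setminus Z)$ that are maximal subject to avoiding $X$, $Y$ and $Q$. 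Since the algorithm loops over all choices of $Z$ and all triples $X,Y,Q$ drawn from the SN-sets, one iteration reproduces this decomposition exactly.

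Next I would verify that the reconstruction is faithful. After the deletion of $Z$, the two arcs that fed the lower reticulation become the n.c.r.-arcs $(u,v)$ and $(u',v')$; reattaching $Z$ below a fresh reticulation hung from these arcs is precisely the n.c.r.-arc gadget of \cite{reflections}, which adds a single reticulation and keeps the network at level two. Thus the optimal networks $N_2$ on $X\cup Z$ and $N_3$ on $Y\cup Z$ (each with $Z$ below an n.c.r.-arc) combine with the simple level-1 backbone to give a candidate whose reticulation count equals one plus the sum of the sub-solution counts, and case 2d is handled analogously, the only difference being the path-ending-in-a-cycle backbone built in the penultimate loop. Because every candidate is explicitly tested for consistency with $T|S$ before being accepted, no invalid network is ever recorded, and by the induction hypothesis each sub-solution used is itself optimal; hence $f(S)$ is the true optimum. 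The main obstacle I anticipate is exactly this structural step: one must check that the n.c.r.-arc recombination never forces an extra reticulation into a biconnected component and that the ``push-down'' of stray SN-sets still applies after $Z$ has been removed, including the degenerate identification $Y=Q$ occurring in case 2d.

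For the running time I would count the nested loops. There are $O(n)$ SN-sets, so the loop over $S$ and the loop over $Z$ each contribute a factor $O(n)$; recomputing the SN-sets of $T'$ and forming $T'\nabla\mathcal{C}$ cost $O(n^3)$, which is dominated; the triple loop over $X,Y,Q$ contributes $O(n^3)$; and for each triple the algorithm enumerates the $O(n)$ candidate backbones (the simple level-1 networks from \cite{JS2} together with the path-ending-in-cycle networks) and tests each resulting $N^*$ for consistency with the $O(n^3)$ triplets of $T|S$, costing $O(n^4)$ per triple. Multiplying the factors $O(n)\cdot O(n)\cdot O(n^3)\cdot O(n^4)$ yields the claimed $O(n^9)$ bound, with the per-iteration consistency check being the dominant cost.
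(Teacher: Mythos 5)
Your proposal follows the paper's proof essentially step for step: the same induction on SN-sets, the same reduction of the two-reticulation case to the one-reticulation case by deleting $Z$ and reusing the push-down argument of Lemma~\ref{lem:snsets} on $N'\setminus Z$, the same n.c.r.-arc reconstruction via $N_2$ and $N_3$, and the same decomposition of the running time into $O(n^5)$ choices of $S,X,Y,Z,Q$ times $O(n)$ candidate backbones times an $O(n^3)$ consistency check. The one point where the paper is more explicit is the case analysis proving that the reconstructed $N^*$ actually is consistent with $T|S$ in the correct iteration (needed for completeness, since the explicit check only gives soundness), but you correctly flag this verification as the step to carry out.
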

\begin{proof}
Consider some SN-set $S$ and assume that there exists an optimal solution $N_S$ consistent with $T|S$. The proof is by
induction on the size of $S$. If the highest biconnected component of $N_S$ contains one reticulation then the
algorithm constructs an optimal solution by the proof of Theorem~\ref{thm:lev1}. Hence we assume from now on that the
highest biconnected component of $N_S$ contains two reticulations.\\
\\
Consider the four graphs in Figure~\ref{fig:level2}. Any biconnected component containing two reticulations is a
subdivision of one of these graphs \cite[Lemma~13]{arxiv}. These graphs are called simple level-2 \emph{generators} in
\cite{arxiv} and $X$, $Y$, $Z$ and $Q$ each label, in each generator, a \emph{side} of the generator, i.e. either an
arc or a vertex with indegree 2 and outdegree 0. Suppose that the highest biconnected component of $N_S$ is a
subdivision of generator $G$ and let the set $X$ ($Y$, $Z$, $Q$ respectively) be defined as the set of leaves
reachable in $N_S$ from a vertex with a parent on the path corresponding to the side labelled $X$ ($Y$, $Z$, $Q$
respectively) in $G$.\\
\\
To find an optimal network consistent with $T|(X\cup Z)$ (or $T|(Y\cup Z)$) such that $Z$ is below an n.c.r.-arc we
can use the following approach. If there are more than two maximal SN-sets then it is not possible. Otherwise, we
create a root and connect it to two networks for the two maximal SN-sets. If one of these maximal SN-sets contains $Z$
as a strict subset then we create a network for this set recursively. For other maximal SN-sets we use the optimal
networks computed in earlier iterations.\\
\\
Given a network $N'$ and a set of leaves $L'$ below a cut-arc $(u,v)$ we denote by $N'\setminus L'$ the network
obtained by removing $v$ and all vertices reachable from $v$ from $N'$, deleting all vertices with outdegree zero and
suppressing all vertices with indegree and outdegree both one.

\begin{claim}[2] There exists an optimal solution $N'$ such that the sets of leaves below highest cut-arcs of $N'\setminus Z$
are $X$, $Y$, $Q$ and the SN-sets w.r.t. $T|(S\setminus Z)$ that are maximal under the restriction that they do not
contain $X$, $Y$ or $Q$.
\end{claim}
\begin{proof}
The highest biconnected component of $N'\setminus Z$ contains just one reticulation and the same arguments can be used
as in the proof of Lemma~\ref{lem:snsets}. \qed
\end{proof}

\noindent Let $N'$ be a network with the property described in the claim above and $\mathcal{C}$ the collection of
sets of leaves below highest cut-arcs of $N'\setminus Z$. At some iteration the algorithm will consider this set
$\mathcal{C}$. Let $T'$ equal $T|(S\setminus Z)$. If we replace in $N'\setminus Z$ each set of leaves below a highest
cut-arc by a single leaf, then we obtain a network consistent with $T'\nabla \mathcal{C}$ which is either a simple
level-1 network or a path ending in a simple level-1 network, with $X$ a child of the root, $Q$ the child of the
reticulation; and each vertex of the path has a leaf as child. The algorithm considers all networks of these types, so
in some iteration it will consider the right one. Let $N^*$ be the network constructed by the algorithm in this
iteration. It remains to prove that $N^*$ (i) is consistent with $T|S$, (ii) contains a minimum number of
reticulations and (iii) is a level-2 network.\\
\\
To prove that $N^*$ is consistent with $T|S$, consider any triplet $xy|z\in T|S$. First suppose that $x,y$ and $z$ are
all in $Z$ or all in the same set of
$\mathcal{C}\setminus\{X,Y\}$. Then $x,y$ and $z$ are elements of some SN-set $S'$ with
$|S'|<|S|$. Triplet $xy|z$ is consistent with the subnetwork $N_{S'}$ by the induction hypothesis and hence with
$N^*$.\\
\\
Now suppose that $x,y$ and $z$ are all in $X\cup Z$ (or all in $Y\cup Z$). Consider the construction of the network
consistent with $X\cup Z$ such that $Z$ is below an n.c.r.-arc. First suppose that at some level of the recursion
there are two maximal SN-sets, each containing leaves from $\{x,y,z\}$. Then it follows that $x$ and $y$ are in one
maximal SN-set and $z$ in the other one, by the definition of SN-set, and hence that $xy|z$ is consistent with the
constructed network. Otherwise, $x,y$ and $z$ are all in some subnetwork $N_S'$ with $|S'|<|S|$ and is $xy|z$
consistent with this subnetwork (by the induction hypothesis) and hence with $N^*$.\\
\\
Now consider any other triplet $xy|z\in T|S$, which thus contains leaves that are below at least two different highest
cut-arcs. Observe that the highest biconnected components of $N^*$ and $N'$ are identical; the only differences
between these networks occur in the subnetworks below highest cut-arcs. Therefore $xy|z$ is consistent with $N^*$
since it is consistent with $N'$.\\
\\
To show that $N^*$ contains a minimum number of reticulations consider any set $S'$ of leaves below a highest cut-arc
$a=(u,v)$ of $N^*$. The subnetwork $N_{S'}$ rooted at $v$ contains a minimum number of reticulations by the induction
hypothesis. Hence $N^*$ contains at most as many reticulations as $N'$, which is an optimal solution.\\
\\
In the networks $N_2$ and $N_3$ is $Z$ the set of leaves below an n.c.r.-arc. This implies that none of the potential
reticulations in these networks end up in the highest biconnected component of $N'$. Therefore, this biconnected
component contains exactly two reticulations. All other biconnected components of $N'$ also contain at most two
reticulations by the induction hypothesis. We thus conclude that $N'$ is a level-2 network.\\
\\
To conclude the proof we analyse the running time of the algorithm. The number of SN-sets is $O(n)$ and hence there
are $O(n)$ choices for each of $S,X,Y,Z$ and $Q$. For each combination of $S,X,Y,Z$ and $Q$ there will be $O(n)$
networks $N^*$ constructed and for each of them it takes $O(n^3)$ time to check if it is consistent with $T|S$ (in
line~23). Hence the overall time complexity is $O(n^9)$. \qed\end{proof}


\section{Constructing Networks Consistent with Precisely the Input Triplet Set}
\label{sec:extreme} In this section we consider the problem MIN-REFLECT-$k$. Given a triplet set $T$, this problem
asks for a level-$k$ network $N$ that is consistent with precisely those triplets in $T$ (if such a network exists)
and has minimum level over all such networks. We will show that this problem is polynomial-time solvable for each
fixed $k$.\\
\\
Recall that we use $T(N)$ to denote the set of all triplets consistent with a network $N$. Furthermore, we say that a
network $N$ \emph{reflects} a triplet set $T$ if $T(N)=T$. The problem MIN-REFLECT-$k$ thus asks for a minimum level
network $N$ that reflects an input triplet set $T$, for some fixed upper bound $k$ on the level of $N$. Note that, if
$N$ reflects $T$, that $N$ is in general not uniquely defined by $T$. There are, for example, several distinct simple
level-2 networks that reflect the triplet set $\{ xy|z, xz|y, zy|x \}.$\\
\\
Note that MIN-REFLECT-0 can be easily solved by using the algorithm of Aho et al. \cite{aho}. This follows because if
a tree $N$ is consistent with a dense set of triplets $T$, then $N$ is unique \cite{JS1} (and $T(N) = T$).\\
\\
\textbf{Theorem \ref{thm:all}. }\emph{Given a dense set of triplets $T$, it is possible to construct all
simple level-$k$ networks consistent with $T$ in time $O(|T|^{k+1})$.}\\
\\
\textbf{Lemma \ref{singleton}. }\emph{Let $N$ be any simple network. Then all the nontrivial SN-sets of $T(N)$ are singletons.}\\
\\
\noindent As we will show, the above lemma allows us to solve the problem MIN-REFLECT-$k$ by recursively constructing simple
level-$k$ networks, which we can do by Theorem~\ref{thm:all}. This leads to the algorithm MINPITS-$k$ (MINimum level
network consistent with Precisely the Input Triplet Set).\\
\\
\textbf{Theorem \ref{thm:minpits}. }\emph{Given a set of triplets $T$, Algorithm MINPITS-$k$ solves MIN-REFLECT-$k$ in time $O(|T|^{k+1})$, for any fixed
$k$.}

\subsection{Constructing all simple level-$k$ networks in polynomial time}

To prove Theorem \ref{thm:all} we first need several utility lemmas. Recall the concept of a \emph{simple level-k
generator} \cite[Section~3.1]{arxiv}. (See also the proof of Theorem~\ref{alg:MARLTN}.)

\begin{lemma}
\label{lem:genbound}
Let $G$ be a simple level-$k$ generator. Then $G$ contains $O(k)$ vertices and $O(k)$ arcs.
\end{lemma}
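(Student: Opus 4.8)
The plan is to prove this by a direct counting argument based on the degree constraints that define a simple level-$k$ generator. Recall that such a generator $G$ is a biconnected directed acyclic graph in which every vertex is one of three types: the unique root (indegree $0$, outdegree $2$), a split vertex (indegree $1$, outdegree $2$), or a reticulation (indegree $2$, outdegree $0$ or $1$), and that by definition $G$ contains exactly $k$ reticulations. The first step is simply to pin down these degree constraints precisely, in particular the fact that a reticulation of the generator has outdegree $1$ when its outgoing arc lies inside the biconnected structure and outdegree $0$ when that arc was a (trivial) cut-arc that got suppressed in forming the generator. Once these are fixed, everything else is bookkeeping.

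Let $s$ denote the number of split vertices, let the number of reticulations be $k$, and let $k_1 \le k$ be the number of reticulations having outdegree $1$. I would then count the number of arcs $A$ of $G$ in two ways, using that the sum of the outdegrees and the sum of the indegrees each equal $A$. Summing outdegrees gives $A = 2 + 2s + k_1$ (contribution $2$ from the root, $2$ from each split vertex, and $1$ from each outdegree-$1$ reticulation). Summing indegrees gives $A = s + 2k$ (contribution $1$ from each split vertex and $2$ from each reticulation; the root contributes $0$). Equating the two expressions yields $s = 2k - 2 - k_1$, and since $k_1 \ge 0$ this gives the bound $s \le 2k - 2$.

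From this single inequality both claims follow immediately. The total number of vertices is $1 + s + k \le 1 + (2k-2) + k = 3k - 1 = O(k)$, and the number of arcs is $A = s + 2k \le (2k-2) + 2k = 4k - 2 = O(k)$. I would close by remarking that the bound is in fact linear rather than merely asymptotic, which is what is actually needed to feed into the $O(|T|^{k+1})$ running-time analysis of Theorem~\ref{thm:all}.

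The argument is essentially routine, so there is no serious obstacle in the counting itself; the only point requiring care is the justification of the degree constraints for a generator, and especially the bookkeeping of outdegree-$0$ versus outdegree-$1$ reticulations, since miscounting these would throw off the handshake identity. I would therefore state the degree structure explicitly at the outset, citing the definition of simple level-$k$ generator from \cite{arxiv}, and verify that no other vertex types can occur in a biconnected generator before proceeding to the two-way edge count.
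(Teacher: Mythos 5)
Your proof is correct and uses essentially the same argument as the paper: a handshake count of indegrees versus outdegrees yielding $s \leq 2k-2$ and hence at most $3k-1$ vertices. The only (harmless) difference is that you bound the number of arcs exactly via the indegree sum $s+2k \leq 4k-2$, whereas the paper uses the cruder maximum-degree-3 bound $(9k-3)/2$; both are $O(k)$.
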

\begin{proof}
Suppose $G$ contains $s$ split vertices, $k$ reticulation vertices and 1 root. Then the total
indegree equals $s+2k$ and the total outdegree is at least $2s+2$. Given that the
total indegree equals the total outdegree we get that $s+2k \geq 2s+2$ and hence that
$s \leq 2k-2$. So the total number of vertices is at most $3k-1$. All the vertices have
at most degree 3 so there are at most $(9k-3)/2$ arcs. \qed
\end{proof}

\begin{lemma}
\label{lem:edgebound}
Let $N$ be a level-$k$ network on $n$ leaves, where $k$ is fixed. Then $N$ contains $O(n)$ vertices
and $O(n)$ arcs.
\end{lemma}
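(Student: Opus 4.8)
The plan is to reduce everything to bounding the number of reticulation vertices, and then to bound that number using the tree-like structure formed by the biconnected components of $N$.

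First I would count vertices by type. Write $r$, $s$ and $n$ for the numbers of reticulations, split vertices and leaves of $N$. Since the total indegree equals the total outdegree (both equal the number of arcs $m$), the degree constraints from the definition of a network (root: outdegree $2$; split vertex: indegree $1$, outdegree $2$; reticulation: indegree $2$, outdegree $1$; leaf: indegree $1$) give $s + 2r + n = 2 + 2s + r$, hence $s = r + n - 2$. Consequently the total number of vertices is $1 + s + r + n = 2r + 2n - 1$ and the number of arcs is $m = 3r + 2n - 2$. Both are therefore $O(n)$ provided $r = O(n)$, so the whole lemma reduces to showing that a level-$k$ network on $n$ leaves has $O(n)$ reticulations.

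Next I would bound $r$ by $k$ times the number of nontrivial biconnected components. Since each biconnected component contains at most $k$ reticulations (by the definition of a level-$k$ network) and every reticulation lies in exactly one component, we have $r \le k\cdot b$, where $b$ is the number of nontrivial biconnected components. It then suffices to show $b \le n-1$. To do this I would collapse each nontrivial biconnected component of $N$ to a single vertex; because the only cycles of $N$ live inside biconnected components, the resulting structure $T'$ is a tree (its underlying undirected graph is connected and acyclic). The leaves of $T'$ are exactly the $n$ leaves of $N$, and its internal vertices are the collapsed components together with any split vertices lying only in trivial components. Every such internal vertex has outdegree at least two: a collapsed nontrivial component has at least three outgoing (cut-)arcs by the standing assumption that every nontrivial biconnected component has at least three outgoing arcs, while each remaining internal vertex is a split vertex of outdegree $2$. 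A standard counting argument for out-trees all of whose internal vertices have outdegree $\ge 2$ (comparing $\sum \mathrm{outdeg} \ge 2I$ with the tree identity $\text{arcs} = I + n - 1$, where $I$ is the number of internal vertices) yields $I \le n-1$, hence $b \le n-1$. Combining the bounds gives $r \le k(n-1) = O(n)$ for fixed $k$, and the vertex and arc counts above are then $O(n)$.

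The main obstacle is this last step: one must argue carefully that collapsing the nontrivial biconnected components really produces a tree, and in particular that distinct outgoing arcs of a single component lead to distinct subtrees — otherwise two such arcs would force the two endpoints' components into a common biconnected component, contradicting maximality. Everything else is routine degree bookkeeping. As an alternative route to the reticulation bound one could instead invoke Lemma~\ref{lem:genbound}, observing that each nontrivial component is a subdivision of a simple level-$j$ generator with $j \le k$; but the block-tree counting above seems the most direct and self-contained argument.
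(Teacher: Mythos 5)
Your proof is correct, and it reaches the bound by a somewhat different accounting than the paper's. Both arguments rest on the same structural fact --- that collapsing the nontrivial biconnected components of $N$ yields a rooted tree with $n$ leaves in which every internal vertex has outdegree at least two, hence at most $n-1$ internal vertices (the paper calls this the component tree $B(N)$ and asserts its tree structure with less justification than you give). From there the paper bounds the size of each component directly via Lemma~\ref{lem:genbound}: each component is a subdivision of a simple level-$\leq k$ generator with $O(k)$ arcs, each outgoing cut-arc adds at most one arc and one vertex, and summing over the at most $n-1$ components gives the bound. You instead use the indegree/outdegree identity to get $s = r+n-2$, so that the vertex and arc counts become explicit linear functions of $r$ and $n$ (namely $2r+2n-1$ and $3r+2n-2$), and then bound $r \leq k(n-1)$ using only the definition of level-$k$ together with the component count. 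Your route is more self-contained --- it does not need Lemma~\ref{lem:genbound} at all and produces explicit constants --- while the paper's route reuses the generator machinery it has already set up for Theorem~\ref{thm:all}. Your closing remark about checking that distinct outgoing arcs of a component lead to distinct subtrees is the right point to worry about; it is immediate here because every arc leaving a maximal biconnected component is a cut-arc, so the collapsed graph can contain no undirected cycle.
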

\begin{proof}
By the definition of a phylogenetic network we can view $N$ as a rooted, directed ``component tree'' $B(N)$ of biconnected components where every
internal vertex of $B(N)$ represents a simple level-${\leq}k$ subnetwork of $N$ (or a single vertex),
and incident arcs of an internal vertex represent arcs incident to the corresponding
biconnected component. $B(N)$ has $n$ leaves, and every internal vertex
has at least two outgoing arcs. $B(N)$ is a tree so it has at
most $n-1$ internal vertices
and thus at most $2n-1$ vertices in total, and at most $2n-2$ arcs. Each internal vertex represents
a simple level-${\leq}k$ generator with at most $(9k-3)/2$ arcs. Every outgoing arc
raises the number of arcs (and vertices) inside the component by at most 1. So the total number
of arcs in $N$ is bound above by $(2n-2) + (n-1)(9k-3)/2 + (2n-2)$, which is $O(n)$, and
the total number of vertices by $n + (n-1)(3k-1) + (2n-2)$, also $O(n)$. \qed
\end{proof}

\noindent Let $N$ be a network with at least one reticulation vertex, and let $v$ be the child of a reticulation
vertex in $N$. If $v$ has no reticulation vertex as a descendant, then we call the subnetwork rooted at $v$ a
\emph{Tree hanging Below a Reticulation vertex (TBR)}. We additionally introduce the notion of the \emph{empty} TBR,
which corresponds to the situation when a reticulation vertex has no outgoing arcs. This cannot happen in a normal
network but as explained shortly it will prove a useful abstraction.

\begin{observation}
\label{obs:tbr}
Every network $N$ containing a reticulation vertex contains at least one TBR.
\end{observation}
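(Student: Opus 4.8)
The plan is to exploit acyclicity to locate a ``lowest'' reticulation vertex and read off a TBR directly beneath it. Since $N$ contains at least one reticulation vertex and is finite, the set $R$ of reticulation vertices is nonempty and finite. Because $N$ is a directed acyclic graph, the reachability relation restricted to $R$ is a strict partial order, so $R$ contains at least one element that is \emph{minimal} under this order, i.e.\ a reticulation vertex $r$ from which no \emph{other} reticulation vertex is reachable. I would single out such an $r$ as the starting point.

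Next I would examine the unique child $v$ of $r$. In a (normal) network every reticulation vertex has outdegree $1$, so $v$ is well defined. The key claim is that $v$ has no reticulation vertex as a descendant. Suppose, for contradiction, that some reticulation vertex $r'$ were reachable from $v$. Then $r'$ would also be reachable from $r$ via the arc $(r,v)$, and since $r \neq r'$ (as $r'$ lies strictly below $r$, again using acyclicity to rule out $r'=r$), this would contradict the minimality of $r$. In particular this argument also shows that $v$ itself is not a reticulation vertex, since otherwise $v$ would be a reticulation descendant-or-equal that violates minimality.

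Having established that $v$ carries no reticulation vertex below it, the definition gives immediately that the subnetwork rooted at $v$ is a TBR hanging below the reticulation vertex $r$; all vertices at or below $v$ have indegree $1$ in $N$, so this subnetwork is genuinely tree-shaped. I would note that the \emph{empty} TBR abstraction is not even needed here, as the child $v$ always exists in a proper network; the observation asks only for the existence of \emph{at least one} TBR, which $v$ supplies.

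The only real content is the existence of a minimal element of $R$ under reachability, and this is a routine consequence of $N$ being a finite DAG; consequently I do not expect any genuine obstacle. The one small point worth stating carefully is why the offending reticulation $r'$ must differ from $r$ (so that minimality is actually contradicted), which is again immediate from acyclicity, since no vertex is reachable from itself along a directed path of positive length.
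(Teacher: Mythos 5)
Your proof is correct and takes essentially the same extremal argument as the paper: the paper picks the child of a reticulation vertex at maximum distance from the root and derives a contradiction, while you pick a reachability-minimal reticulation vertex and argue directly, which is only a cosmetic difference. If anything, your phrasing via the reachability partial order sidesteps the minor ambiguity in the paper's appeal to ``distance from the root'' in a DAG with multiple root-to-vertex paths.
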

\begin{proof}
Suppose this is not true. Let $v$ be the child of a reticulation vertex in $N$ with maximum distance from the root.
There must exist some vertex $v' \neq v$ which is a child of a reticulation vertex and which is a descendent of $v$.
But then the distance from the root to $v'$ is greater than to $v$, contradiction. $\Box$
\end{proof}

\noindent Note that, because a TBR is (as a consequence of its definition) below a cut-arc, there exists an SN-set $S$
w.r.t. $T$ such that $T|S$ is consistent with (only) the TBR. An SN-set $S$ such that $T|S$ is consistent with a tree,
we call a \emph{CandidateTBR} SN-Set. Every TBR of $N$ corresponds to some CandidateTBR SN-Set of $T$, but the
opposite is not necessarily true. For example, a singleton SN-set is a CandidateTBR SN-Set, but it might not be the
child of a reticulation vertex in $N$.\\
\\
We abuse definitions slightly by defining the \empty{empty} CandidateTBR SN-Set, which will correspond to the empty
TBR. (This is abusive because the empty set is not an SN-set.) Furthermore we define that every triplet set $T$ has an
empty CandidateTBR SN-Set.

\begin{observation}
\label{obs:gettbr}
Let $T$ be a dense set of triplets on $n$ leaves. There are at most $O(n)$ CandidateTBR SN-sets. All
such sets, and the tree that each such set represents, can be found in total time $O(n^3)$.
\end{observation}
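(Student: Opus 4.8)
The plan is to establish the counting bound and the running-time bound separately, both resting on the SN-tree of $T$.

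For the count, observe that every nonempty CandidateTBR SN-set is by definition an SN-set of $T$, and we have already noted that a dense triplet set has only $O(n)$ SN-sets: any two are nested or disjoint, so they form a laminar family \cite[Lemma~8]{JS1}, giving at most $2(n-1)$ nontrivial ones. Adding the single empty CandidateTBR SN-set leaves the total at $O(n)$, which settles the first claim immediately.

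For the construction, first I would compute the SN-tree in $O(n^3)$ time \cite{JS2}, obtaining all SN-sets together with their nesting. I would then characterise the CandidateTBR property recursively along this tree. Let $S$ be an SN-set whose maximal proper sub-SN-sets are $A_1,\ldots,A_q$ (the children of $S$ in the SN-tree). The characterisation I would use is that $T|S$ is consistent with a tree if and only if (i) each $A_j$ is itself a CandidateTBR, and (ii) the induced set $T|S \,\nabla\, \{A_1,\ldots,A_q\}$ is empty. The forward direction uses the fact that the (unique \cite{JS1}) tree consistent with $T|S$ has exactly the maximal SN-sets of $T|S$ as the leaf-sets of the subtrees hanging off its root, so no triplet can resolve three distinct $A_j$; the reverse direction attaches the subtrees realising the $A_j$ under a common root, the only triplets spanning two distinct sets being unproblematic precisely because each $A_j$ is an SN-set. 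Processing SN-sets from smallest to largest, condition (i) is read off as a Boolean already computed for each child, singletons are CandidateTBRs with the trivial one-leaf tree, and when both conditions hold the tree for $S$ is assembled from the previously built child-trees.

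The main obstacle is keeping the total time at $O(n^3)$ rather than the $O(n^4)$ that a naive evaluation of condition (ii) would incur, since a single triplet lies in $T|S$ for up to $\Theta(n)$ nested $S$. The key observation I would exploit is that a triplet $xy|z$ can witness non-emptiness of $T|S \,\nabla\, \{A_1,\ldots,A_q\}$ only when $x$, $y$ and $z$ fall into three distinct children-subtrees of $S$, and this occurs for exactly one SN-set $S$, namely the least common ancestor of $\{x,y,z\}$ in the SN-tree. Hence each triplet is charged to a single SN-set, so after an $O(n)$-time preprocessing of the SN-tree enabling constant-time ancestor/child-branch queries, one pass over all $O(n^3)$ triplets evaluates condition (ii) at every SN-set simultaneously. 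Together with the $O(n^3)$ SN-tree computation and the $O(n^2)$ total cost of assembling the trees, the overall running time is $O(n^3)$.
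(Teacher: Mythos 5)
Your proof is correct and follows essentially the same route as the paper's: construct the SN-tree in $O(n^3)$ time and then decide the CandidateTBR property bottom-up along it, using the $O(n)$ bound on the number of SN-sets. The only real difference is that the paper's recursive test is simply ``exactly two children, both CandidateTBRs'': for dense $T$ your condition (ii) is equivalent to having exactly two children (if an SN-set had three or more children, density would force some triplet with its leaves in three distinct children, making the induced set nonempty), so the test is $O(1)$ per SN-tree vertex and the LCA-charging machinery you build to evaluate (ii) within $O(n^3)$, while sound, is unnecessary.
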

\begin{proof}
First we construct the SN-tree for $T$, this takes time $O(n^3)$. There is a bijection between the SN-sets of $T$ and the vertices of
the SN-tree. (In the SN-tree, the children of an SN-set $S$ are the maximal SN-sets of $T|S$.)
Observe that a vertex of the SN-tree is a CandidateTBR SN-set if and only if it is a singleton SN-set \emph{or}
it has in total two children, and both are CandidateTBR SN-sets. We can thus
use depth first search to construct all the CandidateTBR SN-sets; note that this is also sufficient
to obtain the trees that the CandidateTBR SN-sets represent, because (for trees) the structure of
the tree is identical to the nesting structure of its SN-sets. Given that there are only $O(n)$ SN-sets,
the running time is dominated by construction of the SN-tree. $\Box$
\end{proof}

\begin{theorem}
\label{thm:all}
Given a dense set of triplets $T$, it is possible to construct all simple level-$k$ networks consistent with $T$ in time $O(|T|^{k+1})$.
\end{theorem}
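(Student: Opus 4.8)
The plan is to reduce the problem to a bounded search over \emph{simple level-$k$ generators} and then to show that, once the generator is fixed, the placement of the leaves can be recovered from $T$ with only $O(|T|^{k})$ guesses, each verifiable in $O(|T|)$ time.

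First I would fix the underlying generator. By the structure theory recalled before Lemma~\ref{lem:genbound}, every simple level-$k$ network is obtained from some simple level-$k$ generator $G$ by subdividing its arc-sides (hanging one leaf, via a trivial cut-arc, off each subdivision vertex) and by attaching a single leaf to each reticulation-sink side. By Lemma~\ref{lem:genbound} each $G$ has only $O(k)$ sides and $O(k)$ arcs, and for fixed $k$ there are only finitely many such generators; so I would loop over all of them, treating their number as a constant depending on $k$. For a fixed $G$, a candidate network is completely specified by (a) an assignment of each leaf to a side and (b) a linear order of the leaves assigned to each arc-side.

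The core of the argument is the search over leaf placements. The key claim I would establish is that the embedding is pinned down by a bounded amount of \emph{anchoring} data: for each of the $k$ reticulations, a constant-size tuple of leaves identifying the sides incident to that reticulation (its two parents' sides and the side below it). Each such tuple ranges over $O(n^3)=O(|T|)$ possibilities, so all $k$ reticulations together contribute $O(|T|^{k})$ configurations. Given one such configuration (together with the fixed $G$), I would argue that the side and relative order of every remaining leaf is forced: for a leaf $x$ and the anchored skeleton, the triplets of $T$ restricted to $x$ and the anchors leave at most one feasible location for $x$, which can be read off directly. The CandidateTBR machinery of Observation~\ref{obs:tbr} and Observation~\ref{obs:gettbr} would be used to pin down which leaves may legitimately sit directly below a reticulation.

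Finally, each fully specified candidate network has size $O(n)$ by Lemma~\ref{lem:edgebound} and can be built and then tested for consistency with $T$ in $O(|T|)$ time by checking each triplet; retaining only the consistent ones yields exactly the set of simple level-$k$ networks consistent with $T$. Multiplying the $O(|T|^{k})$ configurations by the $O(|T|)$ verification cost gives the claimed $O(|T|^{k+1})$ bound. The hard part will be the \emph{completeness-and-forcing} step: I must prove that every simple level-$k$ network consistent with $T$ arises from exactly one of the enumerated anchor configurations, and that once the anchors are fixed, density of $T$ forces the side and order of each remaining leaf uniquely, so that no configuration spawns more than a bounded number of candidates. Showing that this forcing is both sound (no consistent network is missed) and tight (so the exponent is exactly $k+1$, not larger) is where the real work lies; by contrast, the generator enumeration and the final consistency check are routine given the size bounds already proved.
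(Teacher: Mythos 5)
Your high-level budget is right ($O(|T|^{k})$ guesses times $O(|T|)$ verification), but the step you yourself flag as ``the real work'' --- that a constant-size anchor tuple per reticulation forces the side and relative order of every remaining leaf --- is not merely unproven, it cannot work as stated. Each non-anchor leaf $x$ participates in only $O(1)$ triplets together with two anchor leaves, and this constant amount of information cannot determine the linear order of the $\Theta(n)$ leaves hanging along a single arc side: ordering leaves on a side requires comparing those leaves to \emph{each other}, not only to a fixed anchor set. Worse, the paper itself observes that several distinct simple level-2 networks reflect the triplet set $\{xy|z, xz|y, zy|x\}$, so leaf placement is genuinely not determined by $T$; for your enumeration to be complete, either distinct consistent networks must always produce distinct anchor configurations (unproven, and doubtful when a generator side carries few or no leaves --- a leafless side cannot be anchored by a leaf tuple at all), or one configuration must spawn several candidates, contradicting the uniqueness you need for the $O(|T|^{k})$ count. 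You would also need a handle on enumerating the simple level-$k$ generators themselves, which the paper never does for general $k$.

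The paper's proof avoids any forcing lemma and any generator enumeration. It peels off, $k$ times, a leaf hanging below a reticulation (a TBR, guaranteed to exist by Observation~\ref{obs:tbr} and enumerable as CandidateTBR SN-sets, $O(n)$ choices per round), deleting the reticulation each time, until a tree remains; that tree is unique and is built by the algorithm of Aho et al. It then reinserts the $k$ reticulations in reverse order by \emph{exhaustively} guessing, for each, the pair of arcs to subdivide ($O(n^2)$ choices per reticulation, with dummy leaves/roots handling degenerate cases), and keeps only the results that pass an $O(n^3)$ consistency check. Exhaustive arc-pair guessing is precisely what replaces your forcing claim and makes completeness immediate, at the same cost: $O(n^{k})\cdot O(n^{2k})\cdot O(n^{3}) = O(n^{3k+3}) = O(|T|^{k+1})$. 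If you want to rescue your approach you would have to replace ``forced uniquely'' by a bounded-branching argument, which is essentially what the paper's arc-pair enumeration is.
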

\begin{proof}
We claim that algorithm SL-$k$ does this for us. First we prove correctness. The high-level idea is as follows.
Consider a simple level-$k$ network $N$. From Observation \ref{obs:tbr} we know that $N$ contains at least one TBR.
(Given that $N$ is simple we know that all TBRs are equal to single leaves. That is why the outermost loop of the
algorithm can restrict itself to considering only single-leaf TBRs.) By looping through all CandidateTBR SN-sets we
will eventually find one that corresponds to a real TBR. If we remove this TBR and the reticulation vertex from which
it hangs, and then suppress any resulting vertices with both indegree and outdegree equal to 1, we obtain a new
network (not necessarily simple) with one fewer reticulation vertex than $N$. Note that this new network might not be
a ``real'' network in the sense that it might have reticulation vertices with no outgoing arcs. Repeating this $k$
times in total we eventually reach a tree which we can construct using the algorithm of Aho et al. (and is unique, as
shown in \cite{JS1}). From this tree we can reconstruct the network $N$ by reintroducing the TBRs back into the
network (each TBR below a reticulation vertex) in the reverse order to which we found them. We don't, however, know
exactly where the reticulation vertices were in $N$, so every time we reintroduce a TBR back into the network we
exhaustively try every pair of arcs (as the arcs which will be subdivided to hang the reticulation vertex, and thus
the TBR, from.) Because we try every possible way of removing TBRs from the network $N$, and every possible way of
adding them back, we will eventually reconstruct $N$.\\
\\
The role of the dummy leaves in SL-$k$ is linked to the empty TBRs (and their corresponding empty CandidateTBR
SN-Sets.) When a TBR is removed, it can happen (as mentioned above) that a network is created containing reticulation
vertices with no outgoing arcs. (For example: when one of the parents of a reticulation vertex from which the TBR
hangs, is also a reticulation vertex.) Conceptually we say that there \emph{is} a TBR hanging below such a
reticulation vertex, but that it is empty. Hence the need in the algorithm to also consider removing the empty TBR. If
this happens, we will also encounter the phenomenon in the second phase of the algorithm, when we are re-introducing
TBRs into the network. What do we insert into the network when we reintroduce an empty TBR? We use a dummy leaf as a
place-holder, ensuring that every reticulation vertex always has an outgoing arc. The dummy leaves can be removed once
that outgoing arc is subdivided later in the algorithm, or at the end of the algorithm, whichever happens sooner. The
dummy root, finally, is needed for when there are no leaves on a side (see  \cite[Section~3.1]{arxiv}) connected to
the root.\\
\\
We now analyse the running time.  For $k \in \{1,2\}$ we can actually generate all simple level-1 networks in time
$O(n^3)$ using the algorithm in \cite{JS2}, and all simple level-2 networks in time $O(n^8)$ using the
algorithm in \cite{arxiv}. For $k \geq 3$ we use SL-$k$. From Observation \ref{obs:gettbr} we know that each execution of
$FindCandidateTBRs$ (which computes all TBRs in a dense triplet set plus the empty TBR) takes $O(n^3)$ time and
returns
at most $O(n)$ TBRs. Operations such as computing $T'_i$, and the construction of the tree $N'_{k}$, all require time bounded above by
$O(n^3)$. The for loops when we ``guess'' the TBRs are nested to a depth of $k$. The for loops
when we ``guess'' pairs of arcs from which to hang TBRs, are also nested to a depth of $k$. (There will only
be $O(n)$ arcs to choose from.) Checking whether
$N'$ is consistent with $T$, which we do inside the innermost loop of the entire algorithm, takes time $O(n^3)$ \cite[Lemma~2]{byrka}.
So the running time is $O( n(n^3 + n(n^3 \ldots n(n^3 + n^{2k+3} ))$ which is $O(n^{3k+3})$. \qed
\end{proof}

\begin{algorithm}[H]
\caption{SL-$k$ (Construct all Simple Level-$k$ networks)} \label{alg:SLK}
\begin{algorithmic} [1]
\STATE $Net:=\emptyset$\\
\STATE $TBR_1 := L(T)$\\
\FOR{each leaf $b_1 \in TBR_1$}
\STATE $L'_1 := L(T) \setminus \{ b_1 \} $\\
\STATE $T'_1 := T | L'_1$\\
\STATE $TBR_2 := FindCandidateTBRs(T'_1)$\\
\FOR{each $b_2 \in TBR_2$}
\STATE ...\\
\COMMENT{ Continue nesting \textbf{for} loops to a depth of $k$. }\\
\STATE ...\\
\STATE $TBR_k := FindCandidateTBRs(T'_{k-1})$\\
\FOR{each $b_k \in TBR_k$}
\STATE $L'_k := L'_{k-1} \setminus b_k$\\
\STATE $T'_k := T'_{k-1} | L'_k$\\

\COMMENT{ At this point we have finished ``guessing'' where the TBRs are, }\\
\COMMENT{ and $(\{b_1\}, b_2, ..., b_k)$ is a vector of (possibly empty) subsets of $L(T)$. }\\
\COMMENT{ We now ``guess'' all possible ways of hanging the TBRs back in. }\\

\IF{$L'_k$ contains 2 or more leaves} \STATE build the unique tree $N'_{k} = (V,A)$ consistent with $T'_k$ if it
exists (see \cite{JS1}) \ELSE
\STATE If $L'_k$ contains 1 leaf $\{x\}$, let $N'_{k}$ be the network comprising the single leaf $\{x\}$\\
\STATE If $L'_k$ contains 0 leaves, let $N'_{k}$ be the network comprising a single, new dummy leaf\\
\ENDIF
\STATE $V := V \cup \{r'\}; A := A \cup \{(r',r ) \}$ \COMMENT{ with $r$ the root of $N'_{k}$ and $r'$ a new dummy root }\\

\STATE Let $H(b_k)$ be the unique tree consistent with $b_k$\\
\COMMENT{ Note: $H(b_k)$ is a single vertex if $|b_k|=1$ and empty if $|b_k| = 0$. }

\FOR{every two arcs $a^{1}_k$, $a^{2}_k$ in $N'_{k}$ (not necessarily distinct)}
\STATE Let $p$ (respectively $q$) be a new vertex obtained by subdividing $a^{1}_k$ (respectively $a^{2}_k$)\\
\STATE Connect $p$ and $q$ to a new reticulation vertex $ret_{k}$\\
\STATE Hang $H(b_k)$ (or a new dummy leaf if $H(b_k)$ is empty) from $ret_{k}$\\
\IF{$a^{1}_k$ (or $a^{2}_k$) was the arc above a dummy leaf $d$}
\STATE{Remove $d$ and if its former parent has indegree and outdegree 1, suppress that}\\
\ENDIF \STATE Let $N'_{k-1}$ be the resulting network
\STATE Let $H(b_{k-1})$ be the unique tree consistent with $b_{k-1}$\\
\FOR{every two arcs $a^{1}_{k-1}$, $a^{2}_{k-1}$ in $N'_{k-1}$ (not necessarily distinct)}
\STATE ...\\
\COMMENT{ Continue nesting \textbf{for} loops to a depth of $k$. }\\
\STATE ...\\
\STATE Let $N'_{1}$ be the resulting network\\
\STATE Let $H(b_1)$ be the tree consisting of only the single vertex $b_1$\\
\FOR{every two arcs $a^{1}_1$, $a^{2}_1$ in $N'_{1}$ (not necessarily distinct)}
\STATE Let $p$ (respectively $q$) be a new vertex obtained by subdividing $a^{1}_1$ (respectively $a^{2}_1$)\\
\STATE Connect $p$ and $q$ to a new reticulation vertex $ret_{1}$\\
\STATE Hang $H(b_1)$ from $ret_{1}$\\
\IF{$a^{1}_1$ (or $a^{2}_1$) was the arc above a dummy leaf $d$}
\STATE{Remove $d$ and if its former parent has indegree and outdegree 1, suppress that}\\
\ENDIF
\COMMENT{ This is the innermost loop of the algorithm. }\\
\STATE{Let $N'$ be the resulting network}\\
\STATE{Remove the dummy root $r'$ from $N'$}\\
\STATE{Remove (and if needed suppress former parents of) any remaining dummy leaves in $N'$}\\
\IF{ $N'$ is a simple level-$k$ network consistent with $T$ }
\STATE{ $Net := Net \cup \{N'\}$ }\\
\ENDIF \ENDFOR \ENDFOR \ENDFOR \ENDFOR \ENDFOR \ENDFOR

\STATE \textbf{return} $Net$\\
\end{algorithmic}
\end{algorithm}

\clearpage

\begin{corollary}
\label{cor:allextreme}
For fixed $k$ and a triplet set $T$ it is possible to generate in time $O(n^{3k+3})$ all
simple level-$k$ networks $N$ that reflect $T$.
\end{corollary}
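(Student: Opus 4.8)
The plan is to derive this directly from Theorem~\ref{thm:all}, after observing that reflecting $T$ is a strictly stronger requirement than being consistent with $T$, and that the gap between the two conditions can be tested cheaply enough to preserve the stated time bound. First I would record the elementary fact that a network $N$ reflects $T$ exactly when $N$ is consistent with $T$ \emph{and} $T(N)\subseteq T$: consistency already gives $T\subseteq T(N)$, so the reverse inclusion is the only additional condition. Consequently every simple level-$k$ network reflecting $T$ is in particular consistent with $T$, and hence lies among the networks enumerated by algorithm SL-$k$ of Theorem~\ref{thm:all}. The overall strategy is therefore to enumerate all simple level-$k$ networks consistent with $T$ and retain precisely those $N$ satisfying the extra condition $T(N)\subseteq T$; this clearly yields all reflecting networks and only reflecting ones.

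Before invoking Theorem~\ref{thm:all} I would dispose of its density hypothesis, which the corollary does not assume. The key observation is that $T(N)$ is dense for any phylogenetic network $N$ on at least three leaves: by deleting one incoming arc of each reticulation (choosing, in topological order, an arc from an already-reachable parent) one extracts a spanning tree of $N$ on all $n$ leaves, and each of its $\binom{n}{3}$ triplets is consistent with $N$. Hence if $T$ is not dense it cannot equal $T(N)$ for any network, so no simple network reflects it and we may return the empty set after an $O(n^3)$ density test; otherwise we proceed with SL-$k$.

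For the running time, recall that SL-$k$ inspects $O(n^{3k})$ candidate networks (guessing $k$ TBRs and $k$ pairs of arcs) and spends $O(n^3)$ per candidate on the consistency check, giving the claimed $O(n^{3k+3})$. To filter for reflection I would, for each candidate that passes the consistency test, decide additionally whether $T(N)\subseteq T$; since $T\subseteq T(N)$ is already known, this amounts to checking that $N$ displays no triplet outside $T$. The main obstacle is keeping this extra test within the per-candidate budget of $O(n^3)$, so that the filtering does not inflate the asymptotics. The naive approach of testing each single triplet separately risks costing $O(n^4)$ per candidate; the crux is therefore to compute the whole of $T(N)$ (or directly verify $T(N)\subseteq T$) in $O(n^3)$ total, handling all $O(n^3)$ leaf-triples with $O(1)$ amortised work each via the reachability and lowest-common-ancestor machinery that underlies the set-consistency test of \cite[Lemma~2]{byrka} on a fixed-level network with $O(n)$ arcs. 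Once this $O(n^3)$ reflection test is in hand, the filtering adds only $O(n^{3k}\cdot n^3)=O(n^{3k+3})$ work, and the total remains $O(n^{3k+3})$ as required.
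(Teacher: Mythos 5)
Your proposal is correct and follows essentially the same route as the paper: run SL-$k$ and replace the final consistency test with a reflection test, noting that the implementation of \cite[Lemma~2]{byrka} implicitly generates $T(N')$ so the per-candidate cost stays $O(n^3)$ and the overall bound $O(n^{3k+3})$ is preserved. Your explicit handling of the non-dense case (via the observation that $T(N)$ is always dense, so a non-dense $T$ cannot be reflected) is a small extra point of rigour that the paper leaves implicit, but it does not change the argument.
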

\begin{proof}
The algorithm SL-$k$ (or, for that matter, the algorithms from \cite{JS2}\cite{arxiv}) can easily be adapted for this purpose: we change
in line 43 ``network consistent with $T$'' to ``network that reflects $T$''. The running time
is unchanged because, whether we are checking consistency or reflection, the implementation of \cite[Lemma~2]{byrka}
implicitly generates $T(N')$. \qed
\end{proof}

\subsection{From simple networks that reflect, to general networks that reflect}

For a triplet $t = xy|z$ and a network $N$, we define an \emph{embedding} of $t$ in
$N$ as any set of four paths $(q \rightarrow x, q \rightarrow y, p \rightarrow q, p \rightarrow z)$ which, except for 
their endpoints, are mutually vertex disjoint, and where $p \neq q$. We say that the vertex $p$ is the \emph{summit} of the 
embedding. Clearly, $t$ is consistent with $N$ if and only if there is at least one embedding of $t$ in $N$.

\begin{lemma}
\label{singleton}
Let $N$ be any simple network. Then all the nontrivial SN-sets of $T(N)$ are singletons.
\end{lemma}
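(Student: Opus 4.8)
The plan is to argue by contradiction: assuming there is a nontrivial SN-set $S$ of $T(N)$ with $|S|\ge 2$, I will exhibit leaves $x,z\in S$ and $y\notin S$ together with an explicit embedding of $xy|z$ in $N$, contradicting the defining property of an SN-set. First I would record the structural consequences of simplicity. Since $N$ is simple it has a single nontrivial biconnected component $B$, which contains the root, and every leaf is attached to $B$ by a trivial cut-arc; in particular every cut-arc of $N$ is trivial, so the set of leaves below any cut-arc is a singleton. I would also restate the SN-condition in the form to be contradicted: $S$ is an SN-set precisely when, for all $x,z\in S$ and all $y\notin S$, neither $xy|z$ nor $zy|x$ is consistent with $N$; equivalently $xz|y$ is the only triplet on $\{x,y,z\}$ that $N$ may display.

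The heart of the argument is a lowest-common-ancestor decomposition. Let $m$ be a reachability-minimal vertex from which every leaf of $S$ is reachable. Because $|S|\ge 2$, $m$ is not a leaf, and minimality forces $m$ to be a split vertex: if $m$ were a reticulation, its unique child would already reach all of $S$, contradicting minimality. Writing $c_1,c_2$ for the children of $m$ and $S_i=S\cap\mathrm{below}(c_i)$, minimality gives that $S_1$ and $S_2$ are both nonempty and both proper subsets of $S$, with $S_1\cup S_2=S$; thus $S$ is genuinely split at $m$. Along the way I would invoke a routine sub-lemma: any two distinct leaves of a binary network are reachable from a common vertex by two internally vertex-disjoint paths.

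I would then split into two cases according to whether some leaf $y\notin S$ is reachable from $m$. If such a $y$ exists, say below $c_1$, I pick $x\in S_1$ and $z\in S_2$ and build an embedding of $xy|z$ with summit $m$: the directed path from $m$ through $c_2$ down to $z$ realises the path $p\to z$, while a common vertex $q$ below $c_1$ furnishing internally disjoint paths to $x$ and $y$ (from the sub-lemma), together with the path from $m$ through $c_1$ down to $q$, realises the remaining three paths. If instead no leaf outside $S$ lies below $m$, then the leaves below $m$ are exactly $S$; I would show that either the arc entering $m$ is a cut-arc, forcing $S$ to be the singleton leaf-set below a trivial cut-arc and contradicting $|S|\ge 2$, or, by biconnectivity, some reticulation below $m$ has a parent outside $\mathrm{below}(m)$. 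In the latter situation this external route lets me group a leaf of $S$ below that reticulation with an outside leaf $y$ against another leaf of $S$ (which exists, again by minimality of $m$), once more producing a forbidden triplet.

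The main obstacle is internal vertex-disjointness of the four embedding paths: reticulations create descendants shared between the $c_1$-side and the $c_2$-side, so the naive paths above may collide at a reticulation reachable from both children of $m$ (or from both inside and outside $\mathrm{below}(m)$). I would handle this by exploiting that a simple level-$k$ network is only a subdivision of a generator with $O(k)$ branch vertices (Lemma~\ref{lem:genbound}): after fixing $m$, I choose $x$, $z$ and the external leaf $y$ so that their attachment sides are distinct, and use the $2$-edge-connectivity of $B$ to reroute the $p\to z$ path around any reticulation shared with the paths to $x$ and $y$. Carrying out this rerouting carefully, together with checking the boundary situations where $c_1$ or $c_2$ is itself a leaf or a reticulation, is the technical core; everything else is bookkeeping on the SN-condition.
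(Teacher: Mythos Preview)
Your overall strategy---find a minimal common ancestor $m$ of $S$, split into cases, and directly exhibit a forbidden triplet $xy|z$ with $x,z\in S$, $y\notin S$---is natural, but the place you yourself flag as ``the technical core'' is exactly where the argument breaks down. The lemma is stated for \emph{arbitrary} simple networks, so the level $k$ is not bounded, and the rerouting sketch you give does not scale. Concretely: in Case~1 you need four internally vertex-disjoint paths $m\to q$, $q\to x$, $q\to y$, $m\to z$, but the path $m\to z$ through $c_2$ may be forced through a reticulation that is also reachable from $c_1$, and then it can meet $q\to x$ or $q\to y$. Your proposed remedy, ``choose $x,z,y$ so their attachment sides are distinct,'' ignores that these leaves are \emph{constrained} ($x\in S_1$, $z\in S_2$, $y\notin S$ below $c_1$); there is no reason all of $S_2$, say, does not sit on a single side that is entangled with the $c_1$-side via several reticulations. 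Invoking $2$-edge-connectivity to ``reroute $p\to z$'' is not a local fix: rerouting around one shared reticulation can create a new intersection at another, and with $k$ reticulations this can cascade. Case~2 has the same problem: you assert that the external parent of some reticulation below $m$ lets you group a leaf below it with an outside leaf $y$ against a second leaf of $S$, but you never verify that the four resulting paths are pairwise internally disjoint.

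The paper's proof is substantially different and is built precisely to control this global interleaving. Instead of trying to construct the forbidden triplet directly, it first establishes an \emph{in-out root embedding} of the (allowed) triplet $xz|y$ with summit at the root, and then introduces the notion of a \emph{twist cover}: a system of undirected paths, arc-disjoint from the embedding, that collectively ``cover'' every vertex on the $r\to q$ segment and link the two sides of the embedding. Existence of a twist cover follows from biconnectivity. Taking a minimum-length twist cover (over all in-out root embeddings) imposes a rigid interleaved structure, and then a local analysis at the endpoint $s$ of the first cover path either produces a triplet witnessing $y\in S$ (the desired contradiction) or yields a strictly shorter twist cover. This minimality device is what replaces your rerouting step; it is the missing idea in your proposal, and without something of comparable strength the disjointness obstacle cannot be dismissed as ``bookkeeping.''
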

\begin{proof}
We prove the lemma by contradiction. Assume thus that there is some SN-set $S$ of $T(N)$ such that $1 < |S| < |L(N)|$.

Let $r$ be the root of $N$. An \emph{in-out root embedding} is an embedding of any triplet $xz|y$ with
$x,z\in S$ and $y\notin S$ that has $r$ as its summit. We begin by proving that an in-out root embedding exists in $N$.
Suppose by contradiction this is not true. The triplet $xz|y$ must be in $T(N)$
because $T(N)$ is dense and $y \not \in S$. Consider then a triplet embedding $(q \rightarrow x, q \rightarrow z, p
\rightarrow q, p \rightarrow y)$ where $p \neq r$, $x, s \in S$ and $y \not \in S$. We assume without loss of
generality that this embedding minimises (amongst all such embeddings) the length of the shortest directed path from
$r$ to $p$. Let $P$ be any shortest directed path from $r$ to $p$. Now, suppose some directed path $Q$ begins
somewhere on the path $P$, and intersects with the path $p \rightarrow y$. This is not possible because it would
contradict the minimality of the length of $P$. For the same reason, $Q$ may not intersect with the path $p
\rightarrow q$. $Q$ may also not intersect with (without loss of generality) the path $q \rightarrow x$ because this
would mean $y \in S$. We conclude that such a path $Q$ either terminates at a leaf $l$, or re-intersects with the path
$P$. It cannot terminate at a leaf $l$ because then we either violate the minimality of the length of $P$ (because we
obtain an embedding of $xz|l$ that has a summit closer to the root than $p$), or we have that $y \in S$. We conclude
that all outgoing paths from $P$ must re-intersect with $P$. However, given that $P$ includes the root, and that in a
directed acyclic graph every vertex is reachable by a directed path from the root, it follows that the last arc on the
path $P$ must be a cut-arc. But this violates the biconnectivity of $N$, contradiction. We conclude that there exists at
least one in-out root embedding in $N$.

Let $(q \rightarrow x, q \rightarrow z, r
\rightarrow q, r \rightarrow y)$ be any in-out root embedding. We observe that the path $r \rightarrow y$ must contain
at least one internal vertex, by biconnectivity. Also, at least one of $q \rightarrow x$ and $q\rightarrow y$ must
contain an internal vertex, because it is not possible for a vertex in a simple network to have two leaf children.

We now argue that there must exist a \emph{twist cover} of the path $r \rightarrow q$. This is defined as a non-empty
set $C$ of undirected paths (undirected in the sense that not all arcs need to have the same orientation) where (i) all
paths in $C$ are arc-disjoint from the in-out root embedding, (ii) exactly
one path starts at an internal vertex $s$ of (without loss of generality) $q \rightarrow z$, (iii)
exactly one path ends at an internal vertex $t$ of $r \rightarrow y$, (iv) all other start and endpoints
of the paths in $C$ lie on $r \rightarrow q$ and (v) for every vertex $v$ of the path $r \rightarrow q$ (including
$r$ and $q$), there is at least one path in $C$ that has its startpoint to the left of $v$, and its endpoint to the right.
Property (v) is crucial because it says (informally) that every vertex on $r \rightarrow q$ is ``covered'' by some
path that begins and ends on either side of it and is arc-disjoint from the embedding. The length of $C$ is defined to be the 
sum of the number of 
arcs in each path in $C$.

Suppose however that a twist cover does not exist. We define a \emph{partial} twist cover as one that satisfies all properties 
of a twist
cover except property (v). Partial twist covers have thus at least one vertex on $r \rightarrow q$ that is not covered. (To see 
that
there always exists at least one partial twist cover note that properties (ii) and (iii) in particular are satisfied by the 
fact that neither the 
removal of $q$ or $r$ can be allowed to disconnect $N$.) So let $C$ be
the partial twist cover with the maximum number of uncovered vertices. Let $d$ be the uncovered vertex that is closest to
$q$. If we removed $d$ we would, by definition, disconnect the union of the paths in $C$ with the in-out root embedding into a 
left part $G$ and a right part $H$. The vertex $d$ does not, however, disconnect $N$, so there
must be some path $P$ not in $C$ that begins somewhere in $G$ and ends somewhere in $H$. If $P$ has its startpoint on a path $X 
\in C$
(where $X$ will be in $G$) and/or an endpoint on a path $Y$ (where $Y$ will be in $H$) then these paths can be ``merged''
into a new path that strictly increases the number of vertices covered. The merging occurs as follows. We take the union of
the arcs in $P$ with those in $X$ and/or $Y$ and discard superfluous arcs until we obtain a path that covers a strict superset
of the union of the vertices covered by $X$ and/or $Y$. (In particular, the fact that $P$ begins in $G$ and ends in $H$ means 
that
the vertex $d$ becomes covered.)
In this way we obtain a new partial twist-cover with fewer uncovered vertices, contradiction. If $P$ has both its startpoint 
and
endpoint on vertices of $r \rightarrow q$ that are not on paths in $C$, then $P$ can be added to the set $C$ and this extends 
the number of covered vertices,
contradiction. If $P$ begins and/or ends elsewhere on the embedding then $P$ can be added to $C$
which again increases the number of vertices covered, contradiction. (If $P$ begins on, without loss of generality, $q 
\rightarrow z$ then
it becomes the new property-(ii) path and the old property-(ii) path should be discarded. Symmetrically, if $P$ ends on $r 
\rightarrow y$ then
it becomes the new property-(iii) path and the old property-(iii) path should be discarded.)

We conclude that for every in-out root embedding
there thus exists a twist cover, and in particular a minimum length twist cover.

We observe that a minimum-length twist cover $C$ has a highly regular, interleaved structure. This regularity follows
because it cannot contain paths that completely contain other paths (simply discard the inner path) and
if two paths $X, Y \in C$ have startpoints that are both covered by some other path $Z \in C$, and (without loss of generality) 
$X$ reaches further right
than $Y$, then we can simply discard $Y$. For similar reasons minimum-length twist covers are vertex- and arc-disjoint. In 
Figure \ref{fig:twists} we show several
simple examples of twist covers exhibiting this regular structure (although it should be noted that minimum-length twist covers 
can contain arbitrarily
many paths.)

Let $C$ be the twist cover of minimum length ranging over all in-out root embeddings of $xz|y$ where $x,z \in S$ and $y \not 
\in S$.
Note that if $C$ contains exactly one path, which is a directed path, then
(irrespective of the path orientation) $y \in S$, contradiction.
The high-level idea is to show that we can always find, by ``walking'' along the paths in $C$, a new in-out root
embedding and twist cover that is shorter than $C$, yielding a contradiction.
Let $X$ be the path in $C$ that begins at $s$, and consider the arc on this path incident to $s$.
The first subcase is if this arc is directed away
from $s$. Continuing along the path we will eventually encounter an
arc with opposite orientation. This must occur before any
intersection of $X$ with the path $r\rightarrow q$ because otherwise
there would be a directed cycle. Let $v$ be the vertex between these
two arcs, $v$ is a reticulation vertex with indegree 2 and outdegree
1, so there must exist a directed path $Q$ leaving $v$ which
eventually reaches a leaf $m$. If $Q$ intersects with $r \rightarrow
y$ then we have that $y \in S$, contradiction. If $Q$ intersects
with either $q \rightarrow x$ or $q \rightarrow z$ then we obtain a
new in-out root embedding of $xz|y$ and a new twist cover for
that embedding that is shorter than $C$, contradiction. If $Q$ does not intersect with the embedding at all,
then it must be true that $m \in S$ (because the triplet $mz|x$ is
in the network). But then we have an in-out root embedding of the
triplet $zm|y$ with twist cover shorter than $C$, contradiction.

The second subcase (see Figure \ref{fig:thelemma}) is when the first arc of $X$ is entering $s$. There must exist some
directed path $R$ from $r$ to $s$ that uses this arc.
The fact that $r$ is the summit of the embedding means that at some point this directed
path departs from the embedding. Let $w$ be the vertex where $R$
departs from the embeddings for the last time. If $w$
is on the path $r\rightarrow y$ then it follows that $y\in S$, a
contradiction. If $w$ is on one of the paths $q\rightarrow x$ or
$q\rightarrow z$ then leads to a new in-out root embedding with
shorter twist cover, a contradiction. The last case is when $w$
lies on the path $r\rightarrow q$. We create a new in-out
root embedding by using the part of $R$ reachable from $w$,
as an alternative route to $z$. In this way $w$ becomes
the ``q'' vertex of the new embedding (denoted $q'$ in the
figure). To see that we also obtain a new twist cover, note principally that paths in $C$ that
covered $w$ become legitimate candidates for property-(ii)
paths in the new twist cover; in the figure $s'$ denotes the beginning
of the property-(ii) path in the new cover. (Such a path can however partly overlap with $R$. In this case
it is necessary to first remove the part that overlaps with $R$.)
We can furthermore discard all paths from
$C$ that covered $w$ except for the one with endpoint furthest
to the right. Even if this means that no paths from $C$ are
discarded (this happens when $w$ is to the left of all
the paths in $C$ that have their beginning points on $r \rightarrow q$) 
we still get a twist cover at least one edge smaller than $C$, because 
(in particular) the first edge of $X$ is no longer needed in $C$.
In any case, contradiction. \qed
\end{proof}

\begin{figure}[h]
\centering
\includegraphics[scale=0.6]{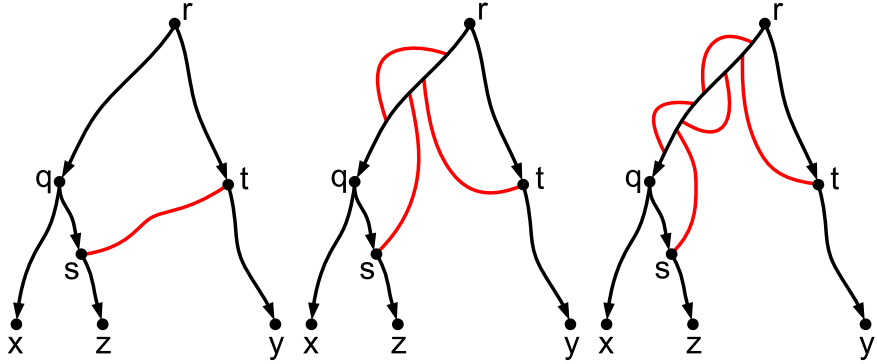}
\caption{Several examples of twist covers (the red, undirected paths) from the proof of Lemma \ref{singleton}. Note that these 
exhibit the
regular, interleaved structure associated with minimum-length twist covers.}
\label{fig:twists}
\end{figure}

\begin{figure}[h]
\centering
\includegraphics[scale=0.6]{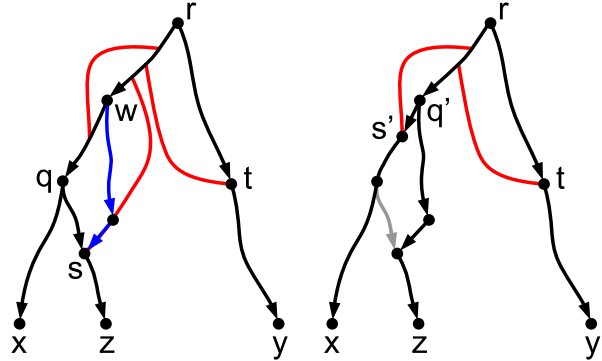}
\caption{The case in the proof of Lemma \ref{singleton} where the arc incident to $s$ is
incoming.}
\label{fig:thelemma}
\end{figure}

\begin{corollary}
Let $T$ be a set of triplets, and suppose there exists a simple network $N$ that reflects $T$.
Let $N'$ be any network that reflects $T$. Then $N'$ is also simple.
\end{corollary}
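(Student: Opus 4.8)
The plan is to reduce the statement to Lemma~\ref{singleton} via the SN-set characterisation of cut-arcs. First I would record the hypothesis in the form it is actually used: since $N$ is simple and $T(N)=T$, Lemma~\ref{singleton} tells us that \emph{every} nontrivial SN-set of $T$ is a singleton. I would also note that $T=T(N)$ is dense, because any network induces at least one triplet on every triple of leaves; this density is exactly what lets me invoke the preliminary fact from Section~\ref{sec:prelim} that the leaves below a cut-arc always form an SN-set.

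I would then argue by contradiction, assuming $N'$ is not simple, i.e.\ that $N'$ contains a nontrivial cut-arc $(u,v)$, so that $v$ is not a leaf. Let $S$ be the set of leaves reachable from $v$. Since $N'$ reflects $T$ and $T$ is dense, $S$ is an SN-set of $T(N')=T$. The crux is then to show that $S$ is a \emph{nontrivial} SN-set with $|S|\ge 2$, which contradicts the observation of the previous paragraph.

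To see $|S|\ge 2$, I would first observe that the head $v$ of a cut-arc cannot be a reticulation: a reticulation retains a second incoming arc after deletion of $(u,v)$, so $(u,v)$ would not disconnect $N'$. Hence $v$ is a split vertex of indegree~$1$ and outdegree~$2$, and the subnetwork rooted at $v$ is itself a (non-redundant) phylogenetic network whose root has outdegree~$2$, which therefore has at least two leaves. To see $S\neq L(N')$, I would use that the root $r$ of $N'$ is never below $v$ (it has indegree~$0$ and $v\neq r$); the remaining worry, that \emph{all} leaves nonetheless lie below $v$, is excluded by the standing non-redundancy convention that every nontrivial biconnected component has at least three outgoing arcs, since otherwise the part of $N'$ strictly above $(u,v)$ would constitute a nontrivial biconnected component having $(u,v)$ as its only outgoing arc. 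Combining these, $1<|S|<|L(N')|$, so $S$ is a nontrivial non-singleton SN-set of $T$, the desired contradiction.

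Finally I would conclude that $N'$ contains no nontrivial cut-arc, which is exactly the defining property of a simple network (and this also rules out $N'$ being a mere tree, since a tree on the $\ge 3$ leaves forced by the simplicity of $N$ has internal, hence nontrivial, cut-arcs); thus $N'$ is simple. The main obstacle I anticipate is the size/nontriviality bookkeeping for $S$ in the third paragraph, in particular cleanly excluding the degenerate configurations in which a cut-arc could sit above a single leaf or above all leaves. This is precisely where the paper's non-redundancy assumption must be used with care; everything else is a direct combination of Lemma~\ref{singleton} with the cut-arc/SN-set correspondence.
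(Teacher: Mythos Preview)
Your proposal is correct and follows essentially the same route as the paper's own proof: assume $N'$ has a nontrivial cut-arc, invoke the cut-arc/SN-set correspondence from Section~\ref{sec:prelim}, and derive a contradiction with Lemma~\ref{singleton}. The paper's proof is in fact terser than yours---it simply asserts that a non-simple $N'$ ``contains a cut-arc below which at least two leaves can be found'' and moves on---so your careful bookkeeping for $|S|\ge 2$ and $S\neq L(N')$ fills in precisely the steps the paper leaves implicit, and your identification of the non-redundancy convention as the tool needed here is spot on.
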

\begin{proof}
If $N'$ is not simple then it contains a cut-arc below which at least two leaves can be found. We know
\cite[Lemma~3]{arxiv} that every cut-arc of a network consistent with $T$ defines an SN-set w.r.t. $T$ equal to the
set of leaves below it. But all the SN-sets of $T$ are singletons, contradiction. \qed
\end{proof}

\noindent Let $T$ be a reflective set of triplets and $N$ be a network that reflects $T$. Define $Collapse(N)$ as the
network obtained by, for each highest cut-arc $a = (u, v)$, replacing $v$ and everything below it by a single new leaf
$V$, which we identify with the set of leaves below $a$. Let $L'$ be the leaf set of $Collapse(N)$. We define a new
set of triplets $T'$ on the leaf-set $L'$ as follows: $XY|Z \in T'$ if and only if there exists $x \in X, y \in Y$ and
$z \in Z$ such that $xy|z \in T$. We write $T' =  CutInduce(N,T)$ as shorthand for the above.

\begin{observation}
\label{obs:triv}
Let $T$ be a reflective set of triplets, and let $N$ be some network that reflects $T$.
Then (1) $T$ is dense, (2) $T' = CutInduce(N,T)$ is also reflective, and $Collapse(N)$ reflects $T'$
and (3) the maximal SN-sets of $T'$, which are in 1:1 correlation with the maximal SN-sets of $T$,
are all singletons.
\end{observation}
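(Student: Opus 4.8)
The plan is to treat the three claims in turn, with part (2) supplying the structural backbone that part (3) relies on.

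For part (1), I would observe that $N$ displays at least one spanning tree $T^{*}$: resolve every reticulation vertex by deleting one of its two incoming arcs and suppressing the resulting degree-two vertices. This keeps every leaf of $N$, and any embedding of a triplet into $T^{*}$ lifts to an embedding into $N$, so $T(T^{*})\subseteq T(N)=T$. Since a tree on at least three leaves is consistent with exactly one triplet on each triple of its leaves, $T(T^{*})$, and hence $T$, contains a triplet for every $\{x,y,z\}\subseteq L_N$. Thus $T$ is dense.

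For part (2), the key structural observation is that the part of $N$ consisting of all vertices that are not strictly below any highest cut-arc is identical to the corresponding part of $M:=Collapse(N)$; collapsing only rewrites each highest cut-arc $(u,v)$ into an arc from $u$ to a new leaf. I would then prove $T(M)=T'$ by transferring embeddings in both directions. For the forward direction, given an embedding of $xy|z$ in $N$ with $x\in X$, $y\in Y$, $z\in Z$ lying below three distinct highest cut-arcs, I claim both branch vertices lie in this shared top part: the lower branch $u$, from which internally disjoint paths reach $x$ and $y$, cannot be strictly below a highest cut-arc (everything below such an arc lies in a single cut-arc group, which would force $X=Y$), and its ancestor, the summit $v$, inherits this. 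Truncating the three downward paths at the relevant highest cut-arcs yields an embedding of $XY|Z$ in $M$. Conversely, any embedding of $XY|Z$ in $M$ has both branch vertices internal, hence in the top part, and can be extended down into the cut-arc subtrees of $N$ to an embedding of some $xy|z$ with $x\in X$, $y\in Y$, $z\in Z$. This establishes $T(M)=CutInduce(N,T)=T'$, so $M$ reflects $T'$ and $T'$ is reflective.

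For part (3), I would first set up a purely combinatorial bijection: writing $\hat{\mathcal{U}}\subseteq L'$ for the image of a union $\mathcal{U}$ of highest-cut-arc groups, I claim $\mathcal{U}$ is an SN-set of $T$ iff $\hat{\mathcal{U}}$ is an SN-set of $T'$. The delicate point is that any forbidden pattern $xy|z$ witnessing that $\mathcal{U}$ is \emph{not} an SN-set must have $x$ and $z$ in different cut-arc groups: if they shared a group $W$, then $W$ being an SN-set of $T$ (each cut-arc group is an SN-set, by \cite[Lemma~3]{arxiv}) would force $y\in W\subseteq\mathcal{U}$. So the witness transfers exactly to a forbidden $XY|Z$ for $\hat{\mathcal{U}}$, and vice versa. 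This bijection is inclusion-preserving, and by \cite[Lemma~5]{arxiv} every maximal SN-set of $T$ is a union of highest-cut-arc groups, so the maximal SN-sets of $T$ and of $T'$ are in 1:1 correspondence. Finally, to see they are singletons I would argue that $M$ has no nontrivial cut-arc: any nontrivial cut-arc of $M$ pulls back to a nontrivial cut-arc of $N$ sitting strictly above some highest cut-arc, contradicting the latter's maximality. Hence $M$ is either a simple network, in which case Lemma~\ref{singleton} applied to $T(M)=T'$ makes all nontrivial SN-sets of $T'$ singletons, or it is the root with two leaf children, in which case $T'$ is empty and its two maximal SN-sets are trivially singletons.

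The main obstacle I expect is the embedding-transfer argument in part (2): pinning down that \emph{both} branch vertices of a cross-group triplet embedding necessarily lie in the region shared by $N$ and $M$, and checking that truncation and extension preserve internal vertex-disjointness. The reconciliation in part (3), where the general ``union of one or more cut-arcs'' of \cite[Lemma~5]{arxiv} must collapse to a single cut-arc because $M$ turns out to be simple, hinges entirely on having established this shared-top-part picture correctly.
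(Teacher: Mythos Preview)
Your proposal is correct and follows essentially the same approach as the paper. The paper's proof is terser, delegating the facts that $Collapse(N)$ is simple and consistent with $T'$ and the 1:1 correspondence of maximal SN-sets to Lemmas~11.2 and~11.3 of \cite{arxiv}, whereas you unpack these directly via the embedding-transfer and bijection arguments---which is precisely the content of those lemmas---before invoking Lemma~\ref{singleton} in the same way.
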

\begin{proof}
The proof of (1) is trivial. For (2), observe firstly that by \cite[Lemma~11.2]{arxiv} (the entire
lemma generalises easily to higher level networks) the network $Collapse(N)$ is simple and consistent with $T'$. Suppose
however there is some triplet $XY|Z \in T(Collapse(N))$ that
is not in $T'$. But we know then that for any $x \in X, y \in Y, z \in Z$ the triplet $xy|z$ is
in $T(N)$, and thus also in $T$, but which would mean that $XY|Z$ is in $T'$. For (3) we argue as follows.
By \cite[Lemma~11.3]{arxiv} it follows that there is a 1:1 correlation between the maximal SN-sets of $T$ and those of $T'$.
Combining the fact that $T'$ is reflective and that $Collapse(N)$ is a simple network that reflects
$T'$, we can conclude from Lemma \ref{singleton} that the maximal SN-sets of $T'$ are all singletons.  \qed
\end{proof}

\begin{lemma}
\label{lem:recurs}
Let $T$ be a reflective set of triplets and let $SN$ be the maximal SN-sets of $T$. Let $T' = T \nabla SN$ be the set of
triplets induced by $SN$. Let $N'$ be a simple network of minimum level that reflects $T'$. Then replacing each leaf $V$ of $N'$
by a network that reflects $T|V$ and is of minimum level amongst such networks, yields a network $N$ that reflects $T$ and is
of minimum level.
\end{lemma}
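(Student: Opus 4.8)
The plan is to prove the two halves of the statement in turn: that the constructed $N$ reflects $T$ (that is, $T(N)=T$) and that $N$ has minimum level among all networks reflecting $T$. The starting observation is structural: expanding each leaf $V$ of the simple network $N'$ turns the arc of $N'$ that entered $V$ into a highest cut-arc of $N$, so the sets of leaves below the highest cut-arcs of $N$ are exactly the maximal SN-sets $SN=\{V_1,\dots,V_q\}$ and $Collapse(N)=N'$. Since $T$ is reflective, Observation~\ref{obs:triv} supplies the facts I will lean on repeatedly: $T$ is dense, $T'=T\nabla SN$ is itself reflective (so the minimum-level simple network $N'$ that the lemma invokes really exists), each $T|V_i$ is reflective (being reflected by the subnetwork below $V_i$ in any network reflecting $T$), and the maximal SN-sets of $T$ coincide with the sets below highest cut-arcs in every reflecting network.

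For $T(N)=T$ I would classify each triplet $xy|z$ by how $\{x,y,z\}$ meets the $V_i$. If all three lie in a single $V_i$, then since $V_i$ is below a cut-arc a triplet on $V_i$ is consistent with $N$ iff it is consistent with $N_{V_i}$, and $N_{V_i}$ reflects $T|V_i$, so the two triplet sets agree here. If exactly two leaves, say $x,y$, lie in $V_i$ and the third $z$ lies in a different $V_k$, then the SN-set property of $V_i$ forbids $xz|y$ and $yz|x$ from $T$ while density forces $xy|z\in T$; and a lowest-common-ancestor argument (lca$(x,y)$ lies inside $N_{V_i}$, strictly below lca$(x,z)$, which lies in the highest biconnected component) shows that $xy|z$, and only $xy|z$, is consistent with $N$. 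So $T$ and $T(N)$ again agree.

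The remaining case, where $x,y,z$ lie in three distinct SN-sets $V_i,V_j,V_k$, is the delicate one. Here consistency with $N$ depends only on the macroscopic structure $Collapse(N)=N'$, so $xy|z$ is consistent with $N$ iff $V_iV_j|V_k\in T(N')=T'$. To match this against $T$ I would prove a uniformity sub-claim: whenever $V_iV_j|V_k\in T'$, the triplet $xy|z$ lies in $T$ for \emph{every} choice of representatives $x\in V_i$, $y\in V_j$, $z\in V_k$, not merely for the single witnessing triple provided by the definition of $T\nabla SN$. Fixing a network $N_0$ with $T(N_0)=T$, the sets $V_i,V_j,V_k$ sit below highest cut-arcs of $N_0$, so in any embedding of a triplet on three distinct cut-arc components the summit and the branching vertex necessarily lie above all three cut-arcs; the three tails below the cut-arcs can then be rerouted to any representatives, yielding an embedding of $xy|z$ in $N_0$ and hence $xy|z\in T$. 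I expect this uniformity step to be the main obstacle, as it is the single place where the argument would fail for a non-reflective dense input and must be handled with care about the embeddings.

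For minimality I would take an arbitrary network $M$ with $T(M)=T$ and show $\mathrm{level}(M)\ge\mathrm{level}(N)$. By Observation~\ref{obs:triv}(3) and the $1{:}1$ correspondence of maximal SN-sets, the sets below the highest cut-arcs of $M$ are again exactly $V_1,\dots,V_q$; consequently $CutInduce(M,T)=T\nabla SN=T'$, so $M$ decomposes into the simple network $Collapse(M)$ (which reflects $T'$) with each leaf $V_i$ re-expanded by the subnetwork $M_{V_i}$ below its highest cut-arc (which reflects $T|V_i$). The biconnected components of $M$ split between $Collapse(M)$ and the $M_{V_i}$, so $\mathrm{level}(M)=\max\bigl(\mathrm{level}(Collapse(M)),\max_i\mathrm{level}(M_{V_i})\bigr)$. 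Since $N'$ is of minimum level among simple networks reflecting $T'$ and each $N_{V_i}$ is of minimum level among networks reflecting $T|V_i$, every term is at least the corresponding term for $N$, giving $\mathrm{level}(M)\ge\max\bigl(\mathrm{level}(N'),\max_i\mathrm{level}(N_{V_i})\bigr)=\mathrm{level}(N)$. This yields minimality and completes the plan.
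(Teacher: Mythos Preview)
Your proposal is correct and follows essentially the same approach as the paper's proof: both establish $T(N)=T$ by a case analysis on how the three leaves of a triplet distribute across the maximal SN-sets (with the key ``uniformity'' step---that $V_iV_j|V_k\in T'$ forces $xy|z\in T$ for \emph{all} representatives---handled via a witness network $N_0$ reflecting $T$), and both derive minimality from the decomposition $\mathrm{level}(M)=\max\bigl(\mathrm{level}(Collapse(M)),\max_i\mathrm{level}(M_{V_i})\bigr)$ applied to an arbitrary reflecting network $M$. The only cosmetic difference is that the paper cites an external result (\cite[Theorem~3]{arxiv}) for the inclusion $T\subseteq T(N)$, whereas you argue both inclusions directly within the case analysis; your treatment of the uniformity step is in fact more explicit than the paper's one-line ``it then also follows that $xy|z\in T$''.
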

\begin{proof}
We first prove that $N$ reflects $T$. Let $N^{0}$ be any network that reflects $T$. Every highest cut-arc of $N^{0}$
corresponds exactly to a maximal SN-set w.r.t. $T$. Hence $T' = T \nabla SN = CutInduce(N^{0},T)$. Note also that for
a maximal SN-set $S \in SN$, the set of triplets $T|S$ is reflective: the subnetwork of $N^{0}$ below the cut-arc
corresponding to $S$ reflects $T|S$. (This ensures that the recursive step does find some network.) We know from
Theorem 3 of \cite{arxiv} that the network $N$ is at least consistent with $T$. But suppose there exists some triplet
$xy|z \in N$ that is not in $T$. By construction it cannot be the case that $x,y,z$ are all below the same highest
cut-arc in $N$ (equivalently: in the same maximal SN-set w.r.t. $T$). If exactly two of the leaves are below the same
highest cut-arc, then we see immediately that $xy|z$ is in $T = T(N^{0})$. Suppose then that all three leaves are from
different maximal SN-sets $X$, $Y$, $Z$ respectively. Then $XY|Z \in T'$ because $T' = T(N')$. So we can conclude the
existence of some $x' \in X, y' \in Y, z' \in Z$ such that $x'y'|z' \in T$. But it then also follows that $xy|z \in
T$. We now prove that $N$ is of minimal level. Observe that all networks $N_0$ that reflect $T$ have the same set of
highest cut-arcs, in the sense that each highest cut-arc always corresponds to exactly one maximal SN-set w.r.t. $T$.
Given that the subnetworks created for the $T|V$ are of minimal level, and that the level of $N$ is equal to the
maximum level ranging over $N'$ and all subnetworks below highest cut-arcs, it follows that $N$ is of minimum level if
and only if $N'$ is of minimum level. \qed
\end{proof}

\begin{algorithm}[t]
\caption{MINPITS-$k$ (MINimum level network consistent with Precisely the Input Triplet Set)} \label{alg:L2}
\begin{algorithmic} [1]

\STATE $N:=\emptyset$\\
\STATE compute the set $SN$ of maximal SN-sets of $T$\\
\IF{$|SN|=2$}
\STATE $N$ consists of a root connected to two leaves: the elements of $SN$\\
\ELSE
\IF{there exists a simple level-$\leq k$ network that reflects $T\nabla SN$}
\STATE let $N$ be such a network of minimum level\\
\ENDIF
\ENDIF
\FOR{each leaf $V$ of $N$}
\STATE recursively create a level-$k$ network $N_V$ of minimal level that reflects $T|V$\\
\ENDFOR
\IF{$N\neq\emptyset$ and all $N_V\neq\emptyset$}
\STATE replace each leaf $V$ of $N$ by the recursively created $N_V$.
\STATE \textbf{return} $N$ \\
\ELSE
\STATE \textbf{return} $\emptyset$ \\
\ENDIF

\end{algorithmic}
\end{algorithm}

\begin{theorem}
\label{thm:minpits}
For fixed $k$ we can solve \textsc{MIN-REFLECT-$k$} in time $O(|T|^{k+1}.)$
\end{theorem}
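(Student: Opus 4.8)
The plan is to establish both correctness and the running time of Algorithm MINPITS-$k$ by induction on $n = |L(T)|$, leaning entirely on the decomposition results already proved. First I would dispose of the density issue: by Observation~\ref{obs:triv}(1) every reflective $T$ is dense, so if $T$ fails the density test we immediately report that no reflecting network exists. Having reduced to the dense case, we have $|T| = \Theta(n^3)$, which is precisely what converts the bound $O(n^{3k+3})$ into the advertised $O(|T|^{k+1})$; so the real task is to show the algorithm runs in $O(n^{3k+3})$ time.

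For correctness I would induct on $n$. The base cases are a single leaf (the network is that leaf) and $|SN|=2$, where $T\nabla SN$ is the empty triplet set on two meta-leaves, whose unique reflecting top structure is the cherry built in line~4; the cross-triplet bookkeeping that shows the cherry together with the recursively substituted subnetworks reflects $T$ is a degenerate instance of the argument in Lemma~\ref{lem:recurs}. For the inductive step ($|SN|\ge 3$) the three ingredients are: (a) by Observation~\ref{obs:triv}(2) together with the corollary immediately following Lemma~\ref{singleton} (any network reflecting a triplet set that admits a \emph{simple} reflecting network is itself simple), if $T$ is reflective then $T' = T\nabla SN = CutInduce(N,T)$ is reflected by a \emph{simple} network, since $Collapse(N)$ is such a network; (b) by Corollary~\ref{cor:allextreme} we can enumerate all simple level-$\le k$ networks reflecting $T'$, which lets us both detect existence and select one of minimum level; and (c) Lemma~\ref{lem:recurs} guarantees that substituting into each leaf $V$ of the chosen $N'$ a minimum-level network reflecting $T|V$ (supplied by the induction hypothesis, since each such $T|V$ is reflective and strictly smaller) yields a network reflecting $T$. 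Minimality of the whole follows because the maximal-SN-set structure, and hence the set of highest cut-arcs, is fixed across all networks reflecting $T$; thus the level of $N$ equals the maximum of the level of $N'$ and the levels of the substituted subnetworks, each individually minimised. Failure detection is correct by the same equivalence: the construction succeeds at every recursion node exactly when $T$ is reflective.

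For the running time I would first compute the SN-tree of $T$ once, in $O(n^3)$ time. Because SN-sets are laminar, the maximal SN-sets of every $T|S$ are just the children of $S$ in this single tree, so no SN-tree is ever recomputed and the recursion tree is exactly (a relabelling of) the SN-tree, which has $O(n)$ vertices. At a node whose leaf set splits into $q_v$ maximal SN-sets, the dominant work is enumerating all simple level-$\le k$ networks reflecting the $q_v$-leaf triplet set $T\nabla SN$, costing $\sum_{j=1}^{k} O(q_v^{3j+3}) = O(q_v^{3k+3})$ by Corollary~\ref{cor:allextreme}.

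The main obstacle, and the step I would argue most carefully, is showing that summing this enumeration cost over the whole recursion does not smuggle in an extra factor of $n$: a crude ``depth times per-level'' estimate would give $O(n^{3k+4})$. The key observation that saves the bound is that $\sum_v q_v$, the total number of children over all internal vertices of the SN-tree, equals the number of SN-tree edges and is therefore $O(n)$. Combining this with $q_v \le n$ gives
\[ \sum_v q_v^{\,3k+3} \;\le\; \Big(\max_v q_v\Big)^{3k+2}\sum_v q_v \;\le\; n^{3k+2}\cdot O(n) \;=\; O(n^{3k+3}). \]
All remaining costs (computing each $T\nabla SN$ and performing the leaf substitutions) total $O(n^4)$, which is dominated for every $k \ge 1$. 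Hence the overall running time is $O(n^{3k+3}) = O(|T|^{k+1})$, as claimed. \qed
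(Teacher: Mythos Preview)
Your proposal is correct and follows essentially the same approach as the paper: correctness is deduced from Lemma~\ref{lem:recurs} (together with Observation~\ref{obs:triv} and Corollary~\ref{cor:allextreme}), and the running-time bound comes from summing the per-node simple-network enumeration cost over the SN-tree and observing that the total number of children is $O(n)$. The paper uses the cruder inequality $\sum_i s_i^{3k+3} \le (\sum_i s_i)^{3k+3}$ where you use $\sum_v q_v^{3k+3} \le (\max_v q_v)^{3k+2}\sum_v q_v$, and you additionally note that the SN-tree need only be computed once, but these are cosmetic differences yielding the same $O(n^{3k+3})$ bound.
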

\begin{proof}
From Observation \ref{obs:triv} we know that there is no solution if $T$ is not dense, and this can be checked in
$O(|T|)$ time. We henceforth assume that $T$ is dense. For $k=0$ we can simply use the algorithm of Aho et al., which
(with an advanced implementation \cite{JS1}) can be implemented to run in time $O(n^3)$, which is $O(|T|)$. For $k
\geq 1$ we use algorithm \textsc{MINPITS-$k$}. Correctness of the algorithm follows from Lemma \ref{lem:recurs}. It
remains to analyze the running time. A simple level-$k$ network (that reflects the input) can be found (if it exists)
in time $O(n^{3k+3})$ using algorithm SL-$k$. (To find the simple network of minimum level we execute in order SL-1,
SL-2, ..., SL-$k$ until we find such a network. This adds a multiplicative factor of $k$ to the running time but this
is absorbed by the $O(.)$ notation for fixed $k$.) Therefore, lines 6 and 7 of \textsc{MINPITS-$k$} take
$O(|SN|^{3k+3})$ time. At every level of the recursion the computation of the maximal SN-sets of $T$ can be done in
time $O(n^3)$, and computation of $T \nabla SN$ also takes $O(n^3)$. The critical observation is that (by Observation
\ref{obs:triv}) every SN-set in $T$ appears exactly once as a leaf inside an execution of SL-$k$. The overall running
time is thus of the form $O( \sum_{i} (n^3 + s_{i}^{3k+3}))$ where $\sum s_i$ is equal to the total number of SN-sets
in $T$. Noting that $\sum_{i} s_i^{3k+3} \leq (\sum_{i} s_i)^{3k+3}$, and that there are at most $O(n)$ SN-sets in
$T$, we obtain for $k \geq 1$ an overall running time of $O( n^{3k+3})$, which is $O( |T|^{k+1})$ because $T$ is
dense. \qed
\end{proof}

\section{Conclusions and open questions}

In this article we have shown that, for level 1 and 2, constructing a phylogenetic network consistent with a dense set
of triplets that minimises the number of reticulations (i.e. DMRL-1 and DMRL-2), is polynomial-time solvable. We feel
that, given the widespread use of the principle of parsimony within phylogenetics, this is an important development,
and testing on simulated data has yielded promising results. However, the complexity of finding a \emph{feasible}
solution for level-3 and higher, let alone a minimum solution, remains unknown, and this obviously requires attention.
Perhaps the feasibility and minimisation variants diverge in complexity, for high enough $k$, it would be fascinating
to explore this.\\
\\
We have also shown, for \emph{every} fixed $k$, how to generate in polynomial time all simple level-$k$ networks consistent with
a dense set of triplets. This could be an important step towards determining whether the aforementioned feasibility question
is tractable for every fixed $k$. We have used the SL-$k$ algorithm to show how, for every fixed $k$, MIN-REFLECT-$k$ is
polynomial-time solvable. Clearly the demand that a set of triplets is exactly equal to the set of triplets in some network
is an extremely strong restriction on the input. However, for small networks and high accuracy triplets
such an assumption might indeed be valid, and thus of practical use. In any case, the concept of reflection is likely
to have a role in future work on ``support'' for edges in phylogenetic networks generated via triplets. Also, there remain some
fundamental questions to be answered about reflection. For example, the complexity of the question ``does \emph{any} network $N$
reflect $T$?'' remains unclear.

\section*{Acknowledgements}
We thank Judith Keijsper, Matthias Mnich and Leen Stougie for many helpful discussions during the writing of the
paper.

\end{document}